\newcommand{\descr}[1]{\noindent \textbf{#1}}
\theoremstyle{plain}
\newtheorem{thm}{\protect\theoremname}
\theoremstyle{plain}
\newtheorem{lem}[thm]{\protect\lemmaname}
\theoremstyle{definition}
\newtheorem{defn}[thm]{\protect\definitionname}
\theoremstyle{plain}
\newtheorem{cor}[thm]{\protect\corollaryname}
\theoremstyle{plain}
\providecommand{\corollaryname}{Corollary}
\providecommand{\definitionname}{Definition}
\providecommand{\lemmaname}{Lemma}
\providecommand{\propositionname}{Proposition}
\providecommand{\theoremname}{Theorem}
\begin{document}

\title{Algorithmics and Complexity of Cost-Driven Task Offloading with Submodular Optimization in Edge-Cloud Environments}

 \author{
          Longkun~Guo,
          Jiawei~Lin, 
         Xuanming~Xu
         and~Peng~Li
  \IEEEcompsocitemizethanks{
\IEEEcompsocthanksitem
    L. Guo, J. Lin and X. Xu are with School of Mathematics and Statistics, Fuzhou University, Fuzhou, Fujian, 350108, P.R. China.\\
   E-mail: longkun.guo@gmail.com
    \IEEEcompsocthanksitem P. Li is with Google LLC, Seattle, WA, United States.
}}


\maketitle

\begin{abstract}
Emerging applications such as autonomous driving pose the challenge of efficient cost-driven offloading in edge-cloud environments. This involves assigning tasks to edge and cloud servers for separate execution, with the goal of minimizing the total service cost including communication and computation costs. In this paper, observing that the intra-cloud communication costs are relatively low and can often be neglected in many real-world applications, we consequently introduce the so-called communication assumption which posits that the intra-cloud communication costs are not higher than the inter-partition communication cost between cloud and edge servers, nor the cost among edge servers. As a preliminary analysis,  we first prove that the offloading problem without the communication assumption is NP-hard, using a reduction from MAX-CUT. Then, we show that the offloading problem can be modeled as a submodular minimization problem, making it polynomially solvable.  Moreover, this polynomial solvability remains even when additional constraints are imposed, such as when certain tasks must be executed on edge servers due to latency constraints.  By combining both algorithmics and computational complexity results, we demonstrate that the difficulty of the offloading problem largely depends on whether the communication assumption is satisfied.  Lastly,  extensive experiments are conducted to evaluate the practical performance of the proposed algorithm, demonstrating its significant advantages over the state-of-the-art methods in terms of efficiency and cost-effectiveness.
\end{abstract} 



\begin{IEEEkeywords}
Submodular minimization, edge-cloud environment, offloading, NP-hardness, communication assumption.
\end{IEEEkeywords}


\section{Introduction}




 The rapid advancement of AI technologies imposes significant challenges for resource allocation in edge-cloud environments, necessitating the optimal integration of both edge and cloud resources. As smart devices ranging from smartphones and computers to internet-connected appliances are generating vast volumes of data,   more sophisticated resource allocation strategies are increasingly needed to manage this data explosion effectively.
 Traditional cloud infrastructures, which began showing their limitations decades ago, have struggled to meet the demands of the expanding Internet of Things (IoT) and intelligent mobile devices, primarily due to bandwidth constraints and resource limitations \cite{shi2016edge}. In response, edge computing has emerged as a promising solution, allowing portions of computational tasks to be executed directly on edge servers, thereby reducing dependence on distant cloud centers and alleviating latency issues \cite{xiao2019edge}.
Edge computing is particularly advantageous in scenarios requiring real-time processing and decision-making, such as autonomous driving. Autonomous vehicles generate substantial real-time data that must be processed instantly to ensure timely and safe navigation decisions \cite{liu2019edge}. Processing data locally on edge servers in such cases significantly reduces latency compared to relying solely on centralized cloud resources, thereby mitigating the risks associated with delayed responses \cite{DingPotential2022}. Moreover, edge computing enhances privacy by enabling sensitive data --- such as that generated by wearable devices, industrial machinery, and other equipment ---  to be processed closer to the data source, reducing the need for transmission over potentially insecure networks \cite{shi2016edge}.

Among the emerging technologies edge-cloud environment employs to effectively bridge the gap between centralized cloud infrastructure and end-users where near-data computing services are with minimum latency\cite{YueTODG2022},  offloading stands out as one of the most promising strategies to enhance service quality, energy efficiency, and resource utilization in edge-cloud environments. By leveraging the complementary advantages of both edge and cloud computing, offloading ensures dynamic, reliable, and quality-of-service (QoS)-optimized IT services \cite{wang2020survey}.
As illustrated in Fig. \ref{fig:Edge-cloud}, a typical architecture of an edge-cloud environment consists of three functional layers: cloud servers, edge servers, and edge devices. Cloud servers possess abundant computational power and resources but suffer from high latency and potential privacy concerns.  In contrast, edge servers can provide timely processing, albeit with more limited computational resources. The offloading problem, therefore, aims to execute tasks collaboratively across cloud servers and edge devices to optimize both computation costs and delays.

In this paper, we focus on the cloud offloading problem, as studied in \cite{MahmoodiCloudOffloading2019,HuamingCloudOffloading2020}, which involves assigning tasks to either edge servers or cloud servers for separate execution. The goal is to minimize the total service cost in the edge-cloud environment, including both computation and communication costs.


\begin{figure*}[t]
    \centering
    \includegraphics[scale=0.4]{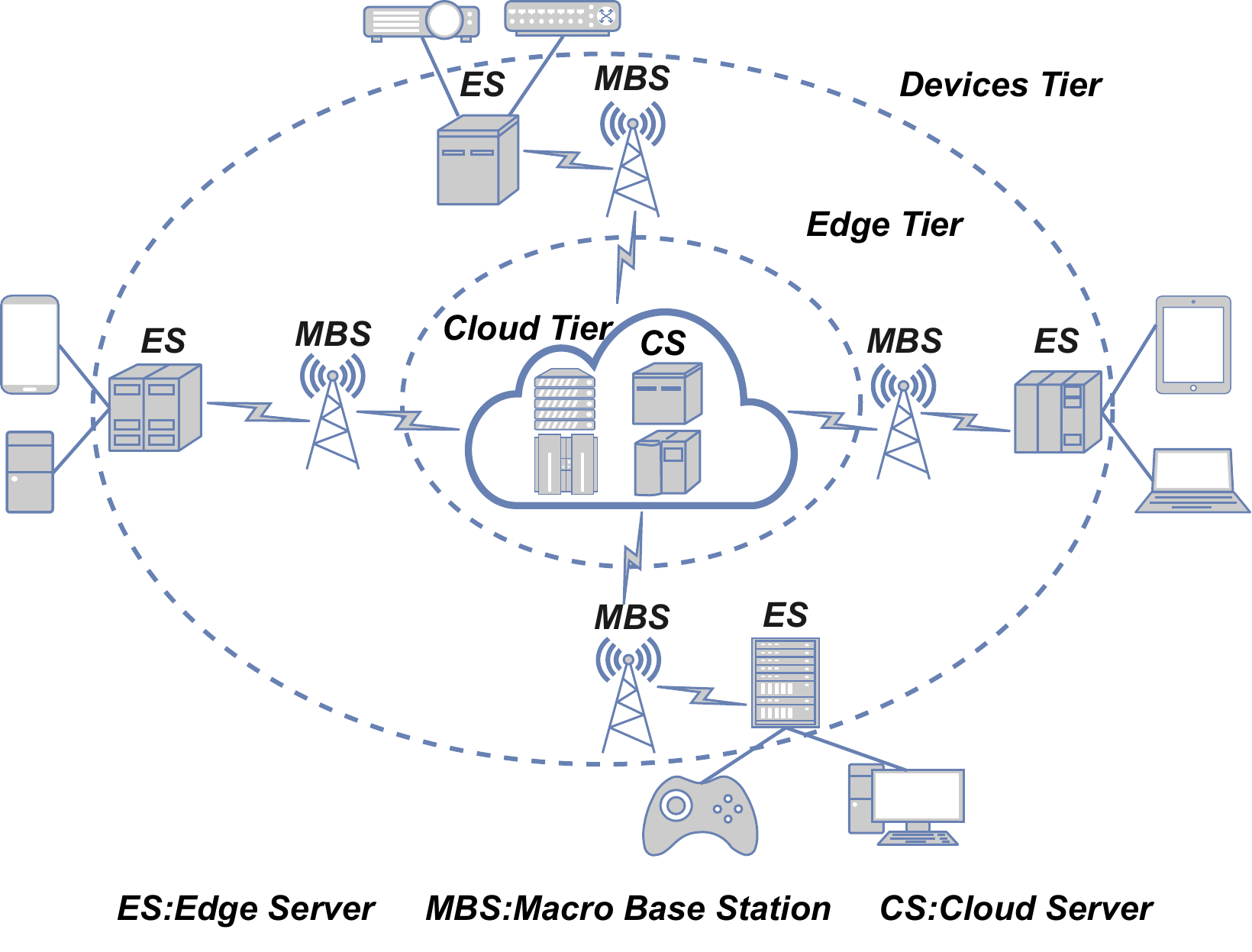}
    \caption{ A typical configuration of the edge-cloud environment.}
    \label{fig:Edge-cloud}
\end{figure*}

\subsection{Related Work}

The offloading problem has been intensively studied in the context of mobile cloud computing \cite{kumar2010cloud,kumar2013survey}, for which many offloading technologies were developed regarding the edge-cloud environment \cite{wang2020survey}. For the offloading problem that minimizes the total cost consisting of communication and computation costs, the input can be modeled as a weighted directed graph, where the vertices of the graph correspond to the tasks (or databases) to be offloaded, and the directed edge between two vertices represent the data communication between two tasks (or two databases). The weight of the vertex and edge denotes the computation cost and the communication cost, respectively. The goal is to divide the graph into two disjoint sub-graphs, one of which is executed at the edge server while the other is at the cloud server. In the traditional model, previous researches assume the communication costs between two tasks on the same side can be ignored, while the costs between two tasks on different sides are symmetric. Based on such assumptions, many algorithms were designed to obtain the minimum service cost in offloading \cite{wu2016optimal,wang2017computational,wang2018cost,dong2018computation}.
However, those assumptions do not represent real-world applications.
Hence, Du \textit{et al.} \cite{du2020algorithmics} proposed a more general model called the heterogeneous model, in which they considered: (1) The communication costs between any two tasks on the same side can not be ignored; (2) The communication costs between edge and cloud servers are asymmetric. Then they showed that the offloading problem with the heterogeneous model is NP-hard and designed an efficient greedy heuristic algorithm. Moreover, they transferred offloading to a classical Min-Cut problem when the model is homogeneous. Later, Han \textit{et al.}\textit{ }\cite{han2020approximation} studied the offloading problem in the heterogeneous model with cardinality constraint and proposed a $\frac{2}{\pi}$-approximation algorithm based on semidefinite programming.

Offloading with other optimization objectives was also studied regarding
the edge-cloud environment. 
Hao \textit{et al. }\cite{hao2018energy} investigated the problem of joint optimization of task caching and offloading with computing and storage resource constraints for the sake of energy efficiency of mobile devices and meeting real-time requirements. They formulated the problem via mixed-integer programming and designed an efficient algorithm. 
To minimize the offloading cost while maintaining performance guarantees, Chen \textit{et al.}\cite{chen2018dynamic} decomposed computation offloading into sub-problems and proposed a dynamic algorithm based on stochastic optimization. 
Observing load balancing as a great concern in the edge-cloud environment, Lim \textit{et al.} \cite{lim2020load} proposed an algorithm by a graph coloring-based implementation using a genetic algorithm to distribute the offloaded tasks to nearby edge servers. 
To assign tasks to the edge servers and minimize the total response time for completing all the tasks, Fang \textit{et al.}\cite{fang2019job} and Tan \textit{et al. }\cite{tan2017online} provided an online approximate 
algorithm.
Luo \textit{et al.} \cite{Luo2024computation} focused on optimizing device computation offloading and proposed a DQN-based resource utilization task scheduling algorithm to minimize overall task processing latency in the system.
To meet the maximum number of deadlines, Meng \textit{et al.}\textit{
}\cite{meng2019dedas} proposed an online greedy algorithm by jointly
considering the management of network bandwidth and computing resources.
Ji \textit{et al.} \cite{ji2023towards} transformed the application offloading problem in heterogeneous edge cloud environments into a minimum cut optimization problem to minimize total latency under given boundary conditions. 
Intending to jointly optimize the latency and energy consumption, Wang \textit{et al.} \cite{Wang2022Dependent} proposed an intelligent task offloading scheme leveraging off-policy reinforcement learning empowered by a Sequence-to-Sequence (S2S) neural network.
Kai \textit{et al.} \cite{Kai2021collaborative} proposed a pipeline-based offloading scheme, formulating the problem of minimizing the sum latency of all mobile devices as a non-convex problem. They utilized the successive convex approximation method to convert it into a convex optimization problem for resolution.

Besides offloading between the edge and cloud servers, some researchers
focus on optimizing the layout of edge servers for better utilization
of edge resources and quality of service. To maximize rewards, Pasteris
\textit{et al.}\cite{pasteris2019service} designed an efficient
offline algorithm with a constant approximate ratio for the problem
that is known as NP-hard. To improve the quality of service, Gao
\textit{et al. }\cite{gao2019winning} formulated offline network
selection and service layout as an optimization problem and proposed
a competitive algorithm based on three delay assumptions. Furthermore,
to timely and efficiently schedule services concerning their requests,
Farhadi \textit{et al.} \cite{farhadi2019service} proposed a system
model that allows services to be migrated between different edge servers.
To improve the performance of personalized service, Ouyang \textit{et
al.} \cite{ouyang2019adaptive} designed a new adaptive user management
service placement mechanism that can tune services according to individual
needs.

In the paper, we focus on the submodular function that has a natural
diminishing returns property and arises in various fields of operations
research such as approximation algorithms, discrete optimization,
game theory, and image collection summarization \cite{schrijver2003combinatorial}.
We transform the offloading problem with communication assumption (to be given later) to submodular minimization, for which Gr\"{o}tschel \textit{et al.
}were the first to design a polynomial algorithm \cite{grotschel1981ellipsoid}.
Later, Cunningham \textit{et al.} \cite{cunningham1985submodular}
proposed a combinatorial strongly polynomial algorithm to compute
the minimum value of a general submodular function. Meanwhile, Iwata
\textit{et al.} \cite{iwata2001combinatorial} independently developed
another combinatorial strongly polynomial algorithm. In contrast,
submodular maximization is known as NP-hard \cite{lewis1983computers}
but admits $\frac{1}{2}$-approximation when the function is non-negative
\cite{buchbinder2015tight}. Differently, it is NP-hard to solve submodular
minimization with a simple constraint like cardinality constraint
\cite{svitkina2011submodular}, and submodular minimization with a
cardinality constraint admits no constant-factor approximation. Interestingly,
there exists $\left(1-\frac{1}{e}\right)$-approximation algorithm
for monotone submodular maximization with a cardinality, knapsack
or matroid constraint \cite{sviridenko2004note}.

\subsection{Our Contributions}

In this paper, we address the offloading problem in edge-cloud environments,
aiming to minimize the total service cost that consists of both communication and computation costs. Our contributions can be summarized as follows:

\begin{itemize}
    \item Prove that the symmetric version of the offloading problem is NP-hard by reducing from MAX-CUT \cite{garey1974some}, thereby resolving the open problem of whether the symmetric offloading problem is NP-hard. This result complements and generalizes the previous NP-hardness proof for the asymmetric version by a reduction from 3SAT  \cite{du2020algorithmics}.

\item Propose an optimal algorithm for
  the offloading problem based on submodular minimization under the assumption that the inter-partition communication cost (i.e., the communication cost between edge and cloud servers) is not less than the intra-cloud communication cost (i.e., the communication cost between cloud nodes).

\item Conduct extensive experiments to evaluate the practical performance of our algorithm, demonstrating that it significantly outperforms the state of the art. 
\end{itemize}

 To our knowledge, our algorithm is the first to use submodular minimization to optimally solve the offloading problem while achieving a polynomial runtime. Moreover, based on our hardness and algorithmic analysis, the offloading problem is NP-hard when the communication assumption is absent, but becomes polynomially solvable when the assumption holds.  This result 
 reveals that the communication assumption is the key factor in determining whether the offloading problem is NP-hard, serving as the dividing line between $NP$-hardness and polynomial solvability.

\section{Preliminaries \label{sec:Preliminaries}}
In this section, we present mathematical notations and symbols
to capture the characteristics of the offloading problem, where some of them follow a similar line as in \cite{du2020algorithmics,han2020approximation}.
We describe the offloading problem with latency constraints as
a weighted directed graph $G=\left(V,\,T,\,E,\,w,\,l\right)$.
\begin{itemize}
    \item  $V$ represents the node set, each node $v_{i}\in V$ 
corresponding to a computational task (or database) ready to be offloaded, which
 can be executed at either the edge or the cloud server. We assume
that there are $n$ tasks in the offloading, denoted by $V=\left\{ v_{1},\,v_{2},\,\cdots,\,v_{n}\right\} $.

 $T\subseteq V$ represents the set of real-time tasks with latency constraints, which can only be processed on edge servers. When $T\neq\emptyset$,
 the problem is called \textit{latency-constrained}.

\item $E$ is the directed edge set that represents the set of communication requirements
between pairs of nodes. Each directed edge $e\left(v_{i},\,v_{j}\right)$
that leaves $v_{i}$ and enters $v_{j}$ indicates there exist communication
requirements from task $v_{i}$ to $v_{j}$. Assume that there are
$m$ directed edges in the offloading problem, i.e., $\left|E\right|=m$.

\item Node weight $w_{i}$ indicates the computation cost for executing
task $v_{i}$. In real-world applications, task $v_{i}$ consumes
a computation cost at the cloud server different from that at the
edge server, because the configuration and performance of the two
sides are different. Thus, node weight $w_{i}$ can be denoted as
a 2-tuple:
\begin{equation}
w_{i}=\left(w_{i}^{edg},\,w_{i}^{cld}\right),\label{eq:node weight}
\end{equation}
where $w_{i}^{edg}$ and $w_{i}^{cld}$ indicate the computation cost
of task $v_{i}$ executed at the edge and the cloud server, respectively.
In addition, there may exist some static tasks that can not be offloaded
to the other side. For example, if task $v_{i}$ is with $w_{i}=\left(w_{i}^{edg},\,+\infty\right)$,
then it must be computed at the edge server; while if $v_{i}$ is
with $w_{i}=\left(+\infty,\,w_{i}^{cld}\right)$, it must be processed
at the cloud server.

\item Each weight $l_{ij}$ indicates the communication cost from task $v_{i}$
to $v_{j}$. Note that there are four possible values for the communication
cost $l_{ij}$, depending on whether $i$ and $j$ are processed on
the clouds or the edge server. More precisely, we associate each edge
with four weights as follows:
\begin{equation}
l_{ij}=\left(l_{ij}^{1},\,l_{ij}^{2},\,l_{ij}^{3},\,l_{ij}^{4}\right),\label{eq:edge weight}
\end{equation}
where each element in the 4-tuple has its own value: $l_{ij}^{1}=l\left(v_{i}^{edg},\,v_{j}^{edg}\right)$,
$l_{ij}^{2}=l\left(v_{i}^{edg},\,v_{j}^{cld}\right)$, $l_{ij}^{3}=l\left(v_{i}^{cld},\,v_{j}^{edg}\right)$
and $l_{ij}^{4}=l\left(v_{i}^{cld},\,v_{j}^{cld}\right)$. The element
$v_{i}^{x},\,x\in\left\{ edg\left(Edge\right),\,cld\left(Cloud\right)\right\} $
means the task $v_{i}$ is executed at $x$ side (the edge or the
cloud server), and hence $l\left(v_{i}^{x},\,v_{j}^{y}\right)$ indicates
the communication cost when task $v_{i}$ located at $x$ side and
$v_{j}$ located at $y$ side. Then we say the offloading problem is \emph{symmetric} if $l_{ij}^{1} = l_{ij}^{4}$ and $ l_{ij}^{2}= l_{ij}^{3}$ both hold, and is \emph{asymmetric} otherwise.

\end{itemize}

For briefness, we call $l_{ij}^{1}$ and $l_{ij}^{4}$ as intra-edge and intra-cloud costs, respectively, and  call $l_{ij}^{1}$ and $l_{ij}^{4}$ as inter-partition cost.  


\begin{defn}[communication
assumption] We say the communication assumption is true for an instance of the offloading problem, if and only if $l_{ij}^{4}\le l_{ij}^{2},\,l_{ij}^{3}$. In other words, the assumption means, the intra-cloud communication cost is not larger than the inter-partition cost.
    
\end{defn}
Note that, the communication assumption is
natural since the intra-cloud communication costs are not higher
than the inter-partition communication costs in many real-world applications.
The phenomenon is because there are significantly better connections between the servers in clouds in comparison to the network connections between clouds and edge servers \cite{prakash2018storage}.

The offloading problem aims to find a partition for $V$, say $V_{edg},\,V_{cld}$
with $V_{cld}\cup V_{edg}=V$ and $V_{cld}\cap V_{edg}=\emptyset$,
such that the total service cost is minimized. Intuitively, $V_{cld}\subseteq V$
and $V_{edg}\subseteq V$ represent the sets of tasks to be processed
at the cloud and the edge server, respectively. In the following,
we will further formally define the total cost of the offloading
problem. The total weight of nodes based on the partition $V_{edg},\,V_{cld}$
as the total computation cost can be calculated by the following formulation:
\begin{align}
C_{comp}\left(V_{G}\right)= & C_{comp}\left(V_{edg},\,V_{cld}\right)\nonumber \\
= & \underset{i\in I(V_{edg})}{\sum}w_{i}^{edg}+\underset{i\in I(V_{cld})}{\sum}w_{i}^{cld}\text{,}\label{eq:computation cost}
\end{align}
where $I\left(X\right)$ denotes the index set of tasks in $X$.
{Moreover, note that there is an overhead cost when transferring
a task from the cloud to the edge server (or the other way around, i.e. edge to the cloud)}. However, this cost can be
merged into the computation cost. Denoting the cost of transferring
task $v_{i}$ by $t_{i}$, we can easily combine the transferring
cost with the computation cost by incorporating $t_{i}$ into Equality
(\ref{eq:computation cost}) and obtain:
\begin{align}
C_{comp}\left(V_{G}\right)= & \underset{i\in I(V_{edg})}{\sum}\left(w_{i}^{edg}+t_{i}\right)+\underset{i\in I(V_{cld})}{\sum}w_{i}^{cld}.\label{eq:trans+comp}
\end{align}

Likewise, the total communication cost based on the partition $P$
can be defined as follows:
\begin{align}
C_{comm}\left(E_{G}\right)= & C_{comm}\left(V_{edg},\,V_{cld}\right)\nonumber \\
= & C_{comm}\left(V_{edg},\,E_{V_{cld}}\right)+C_{comm}\left(G\left[V_{edg}\right]\right) \\
&+C_{comm}\left(G\left[V_{cld}\right]\right),
\end{align}
where $E({V_{edg},\,V_{cld})}$ (or equivalently $E_{V_{edg},\,V_{cld}}$) denote the edges between $V_{cld}$ and $V_{edg}$, and $G\left[V_{cld}\right]$ is the contraction  of $G$ on vertex set $V_{cld}$ that is the set of edges between vertices of $V_{cld}$ in $G$.

Finally, the offloading problem is to minimize the total cost
as below:
\begin{align}
C_{total}\left(G\right)= & C_{total}\left(V_{edg},\,V_{cld}\right)\nonumber \\
= & C_{comp}\left(V_{edg},\,V_{cld}\right)+C_{comm}\left(V_{edg},\,V_{cld}\right)\nonumber \\
= & \underset{i\in I\left(V_{edg}\right)}{\sum}\left(w_{i}^{edg}+t_{i}\right)+\underset{i\in I\left(V_{cld}\right)}{\sum}w_{i}^{cld}\nonumber \\
 & +C_{comm}\left(E\left({V_{edg},\,V_{cld}}\right)\right)+C_{comm}\left(G\left[V_{edg}\right]\right)\nonumber \\
& +C_{comm}\left(G\left[V_{cld}\right]\right).\label{eq:Ctotal}
\end{align}

Throughout this paper, we use \emph{Offloading-CommA } to denote the offloading problem with the communication assumption and further use 
\emph{Offloading-LCCA} to denote \emph{Offloading-CommA }  the offloading problem with latency constraint. 
In the following paragraph, we will first show the NP-hardness of the offloading problem, and then provide polynomial algorithms for both \emph{Offloading-CommA } and \emph{Offloading-LCCA}.

\section{NP-Hardness of Symmetric Offloading 
\label{sec:hard symmetric}}

In this section, we mainly focus on the hardness of the offloading problem without considering communication assumptions, and actually demonstrate the NP-hardness of the offloading problem --- even when the inter-partition communication costs are symmetric and the computation costs could be ignored (\emph{a.k.a. symmetric offloading problem}) --- by reducing from MAX-CUT that is known NP-hard~\cite{garey1974some}.

In an instance of MAX-CUT, we
are given a positive integer $k$ and an undirected graph $G=\left(N,\,A\right)$,
where the numbers of nodes and edges are respectively $\left|N\right|=n$
and $\left|A\right|=m$. The aim is to determine whether there is a cut that divides the vertices of $N$ into two components, such that
the number of edges in the cut is not less than $k$. 

\begin{thm}
\label{thm:max-cut}The symmetric offloading problem is NP-hard.
\end{thm}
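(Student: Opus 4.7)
The plan is to reduce from MAX-CUT in polynomial time to the decision version of the symmetric offloading problem (``is the minimum total cost at most $c$?''). Given a MAX-CUT instance $(G=(N,A),k)$, I will build an offloading instance whose task set $V$ is in one-to-one correspondence with $N$, and whose optimal bipartition $(V_{edg},V_{cld})$ directly encodes a cut of $G$. Since the symmetric offloading problem is obviously in NP (guess a partition, evaluate $C_{total}$ via Equation~(\ref{eq:Ctotal})), the reduction will establish NP-hardness even when all computation costs vanish.

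For the construction, for each vertex $u\in N$ I create a task $v_u$ with computation cost $w_u=(0,0)$ and $t_u=0$, so that $C_{comp}\equiv 0$ regardless of the chosen partition. For each undirected edge $\{u,v\}\in A$ I insert a single directed offloading edge (orientation is immaterial thanks to the symmetry we impose) $e(v_u,v_v)$ with the 4-tuple weight $(l^{1},l^{2},l^{3},l^{4})=(1,0,0,1)$. This satisfies the symmetry conditions $l_{ij}^{1}=l_{ij}^{4}$ and $l_{ij}^{2}=l_{ij}^{3}$, so the constructed instance is indeed a symmetric offloading instance, and the latency set $T$ is taken to be empty. Intuitively, keeping both endpoints on the same side (edge or cloud) costs $1$, while separating them costs $0$, so minimizing $C_{comm}$ pushes edges across the partition.

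The equivalence argument is then short. For any partition $(V_{edg},V_{cld})$, the communication cost decomposes as $C_{comm}=|E(G[V_{edg}])|+|E(G[V_{cld}])|$ (the inter-partition edges contribute $0$), which equals $m-|\mathrm{cut}|$ where $|\mathrm{cut}|$ is the number of edges of $A$ crossing the corresponding partition of $N$. Therefore the offloading optimum is $m-\mathrm{MAXCUT}(G)$, and the answer to ``$C_{total}\le m-k$?'' is YES iff MAX-CUT has a cut of size at least $k$. Since the construction has size linear in $|N|+|A|$, the reduction is polynomial.

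The main obstacle I expect is bookkeeping around the directed-versus-undirected mismatch: the offloading model uses directed edges with a 4-tuple of weights, whereas MAX-CUT is undirected. I will handle this by relying on the symmetry conditions to make the choice of orientation irrelevant (each undirected edge $\{u,v\}$ is realized by exactly one directed edge, which suffices because $l^{2}=l^{3}$). A minor issue is ruling out degenerate partitions in which one side is empty; since the problem statement imposes no nonemptiness requirement, this raises no difficulty, but I will verify that the zero-computation-cost choice does not inadvertently introduce a trivially optimal one-sided solution (it does not, because the bound $m-k$ is tight only when at least $k$ edges are genuinely cut). Putting these pieces together yields the claimed NP-hardness.
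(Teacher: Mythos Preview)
Your proposal is correct and follows essentially the same approach as the paper: both give a polynomial-time reduction from MAX-CUT with zero computation costs and symmetric edge weights that reward separating endpoints, so that the offloading cost is an affine (decreasing) function of the cut size. The only difference is cosmetic---you use weights $(1,0,0,1)$ and threshold $m-k$, whereas the paper uses $(m,1,1,m)$ and threshold $m(m-k)+k$---but the argument is identical.
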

We need only to give a reduction from MAX-CUT 
to the symmetric offloading problem.  The transformation
from MAX-CUT to the decision symmetric offloading problem proceeds
as below:
\begin{itemize}[leftmargin=9pt]
\item Tasks (nodes):
\begin{itemize}
\item For each vertex $u_{i}\in N$, add a task $v_{i}$ into  task set
$V$;
\item Set the cost of $v_{i}$ as $w_{i}^{edg}=w_{i}^{cld}=0$.
\end{itemize}
\item Relationship among tasks (edges):
\begin{itemize}
\item For each edge $e\left(u_{i},\,u_{j}\right)\in A$, add to
edge set $E$ an edge $e\left(v_{i},\,v_{j}\right)$ which corresponds
to the data communication between task $v_{i}$ and $v_{j}$;
\item Set the cost of each edge $e\left(v_{i},\,v_{j}\right)\in E$ as $l_{ij}^{1}=l_{ij}^{4}=m$
and $l_{ij}^{2}=l_{ij}^{3}=1$.
\end{itemize}
\end{itemize}
Then, the aim of the symmetric offloading problem is to determine whether there exists a feasible partition
for $V$ with the total cost not larger than $C=m\left(m-k\right)+k$.

Note that following the computation and communication costs constructed above, the computation costs are all zero while the communication costs are symmetric. Then the correctness of Theorem \ref{thm:max-cut} can be immediately obtained from the following lemma:

\begin{lem}
\label{lem:max-cut lemma} An instance of decision symmetric offloading problem has an offloading solution with the total cost being at most $C=m\left(m-k\right)+k$ if the corresponding instance of MAX-CUT is feasible.
\end{lem}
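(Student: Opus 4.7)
The plan is to directly translate any MAX-CUT witness into a feasible offloading partition and bound its total cost by careful bookkeeping over the edges. First I would observe that, by construction, $w_i^{edg} = w_i^{cld} = 0$ for every task $v_i$, so $C_{comp}$ vanishes and $C_{total}$ reduces to $C_{comm}$ alone. This turns the lemma into a statement purely about edge weights and sidesteps any complication coming from the task computation model or from the transfer cost $t_i$ in Equality (\ref{eq:trans+comp}).

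Next, assuming the MAX-CUT instance is feasible, I would fix a partition $(N_1, N_2)$ of $N$ whose cut has size $k' \ge k$, and define the offloading partition by $V_{edg} = \{v_i : u_i \in N_1\}$ and $V_{cld} = \{v_i : u_i \in N_2\}$. I would then evaluate $C_{comm}$ edge by edge. For each $e(v_i,v_j) \in E$ whose underlying edge in $A$ is \emph{not} cut, both endpoints lie on the same side, so the incurred cost is $l_{ij}^{1}$ or $l_{ij}^{4}$, each equal to $m$ by construction. For each $e(v_i,v_j)$ whose underlying edge \emph{is} cut, the endpoints lie on opposite sides, so the cost is $l_{ij}^{2}$ or $l_{ij}^{3}$, each equal to $1$. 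Summing over the $m - k'$ non-crossing and $k'$ crossing edges yields
\begin{equation}
C_{total} \;=\; m\,(m - k') + k'.
\end{equation}

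Finally, I would check that this quantity meets the threshold. A one-line rearrangement shows that $m(m-k') + k' \le m(m-k) + k$ is equivalent to $(m-1)(k' - k) \ge 0$, which holds because $k' \ge k$ and $m \ge 1$. The argument is essentially bookkeeping, so there is no deep obstacle to overcome; the only point worth flagging is that the symmetry conditions $l_{ij}^{1} = l_{ij}^{4}$ and $l_{ij}^{2} = l_{ij}^{3}$ imposed by the reduction are exactly what let us ignore the orientation of each edge in $E$ when totalling $C_{comm}$. Without this symmetry, crossing edges would have to be split by direction, which would block the clean count and is precisely why the reduction is tailored to the symmetric version targeted by Theorem \ref{thm:max-cut}.
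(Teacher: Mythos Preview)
Your argument is correct and follows the same line as the paper's proof of the stated ``if'' direction: translate a MAX-CUT witness $(N_1,N_2)$ with $q\ge k$ crossing edges into $V_{edg}=N_1$, $V_{cld}=N_2$, observe the computation cost vanishes, and compute $C_{total}=m(m-q)+q\le m(m-k)+k$. You are actually more explicit than the paper in justifying that last inequality via $(m-1)(q-k)\ge 0$; the paper simply asserts the bound.

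One remark: although the lemma is phrased as a one-sided ``if'', the paper's proof in fact establishes the converse as well (an offloading solution of cost at most $C$ yields a cut of size at least $k$), since both directions are needed for the NP-hardness reduction in Theorem~\ref{thm:max-cut}. Your write-up covers exactly what the lemma statement asks for, but you should be aware that the full reduction requires the reverse implication too.
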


\begin{proof}
Suppose there exists a feasible offloading solution $\left(V_{edg},\,V_{cld}\right)$,
whose total cost consisting of both computational and communication costs is $C^\prime$. Then, $C'\leq C$ apparently holds because $\left(V_{edg},\,V_{cld}\right)$ is a feasible solution. By choosing
the edges between $N_{1}=V_{edg}$ and $N_{2}=V_{cld}$ as the cut $\left(N_{1},\,N_{2}\right)$, it remains only to show $\left(N_{1},\,N_{2}\right)$ is a feasible solution to MAX-CUT. Firstly and obviously,
$\left(N_{1},\,N_{2}\right)$ is a cut of $G$ because it includes
every edge between $N_{1}$ and $N_{2}$ and hence can separate them.
Secondly, we will show the number of edges in $\left(N_{1},\,N_{2}\right)$
is not less than $k$. Let the number of edges between any two tasks
at different sides be $q$. Then, since $l_{ij}^{1}=l_{ij}^{4}=m$
and $l_{ij}^{2}=l_{ij}^{3}=1$ by the transformation, we can calculate the total cost of the cut as below:
\begin{equation*}
C'=m\left(m-q\right)+q.
\end{equation*}
On the other hand, we have $ C= m\left(m-k\right)+k$ and $C' \le C$. Combining with the above inequality, we  immediately have $q\ge k$, and get that $\left(N_{1},\,N_{2}\right)$, $N_{1}=V_{edg}$ and $N_{2}=V_{cld}$, is a feasible solution to the instance of MAX-CUT.

Conversely, assume that there exists a feasible cut $\left(N_{1},\,N_{2}\right)$
with the number of edges therein being $q\ge k$. Then we partition the set of tasks $V$ into $V_{edg}=N_1$ and $V_{cld}=N_2$, according to the cut $\left(N_{1},\,N_{2}\right)$. Because the computation cost of each task is zero by the transformation, the total cost of this solution is equal to $m\left(m-q\right)+q$, which is then bounded by  $C= m\left(m-k\right)+k$.  This completes the proof.
\end{proof}

Following the aforementioned transformation for proving Theorem \ref{thm:max-cut},
we can get a more general computational complexity result:
\begin{cor}
The offloading problem is NP-hard even when all the following conditions
hold: 
\begin{itemize}
    \item [(1)] The computation costs are zero; 
    \item [(2)] The intra-cloud and intra-edge
communication costs are equal to $m$;
    \item[(3)]  The inter-partition communication costs are symmetric and equal to $1$.
\end{itemize} 
\end{cor}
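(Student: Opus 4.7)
The plan is to observe that the reduction from MAX-CUT already constructed in the proof of Theorem \ref{thm:max-cut} (via Lemma \ref{lem:max-cut lemma}) \emph{produces instances that simultaneously satisfy all three listed conditions}, so the corollary follows essentially by inspecting the construction rather than by devising a new reduction. The proof will therefore amount to a short bookkeeping argument on top of the previous proof.

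First I would recall the transformation verbatim: for each vertex $u_i \in N$ we create a task $v_i$ with $w_i^{edg}=w_i^{cld}=0$, which gives condition (1); for each edge $e(u_i,u_j)\in A$ we introduce an offloading edge with $l_{ij}^{1}=l_{ij}^{4}=m$, which gives condition (2); and we set $l_{ij}^{2}=l_{ij}^{3}=1$, which gives the symmetry and value specified in condition (3). Thus every instance produced by the reduction already lies in the restricted class described by the corollary.

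Second, I would invoke Lemma \ref{lem:max-cut lemma}, which establishes that a MAX-CUT instance admits a cut of size at least $k$ if and only if the reduced offloading instance admits a partition with total cost at most $m(m-k)+k$. Since the reduction is clearly polynomial-time and MAX-CUT is NP-hard \cite{garey1974some}, this restricted class of offloading instances is itself NP-hard, which is precisely the statement of the corollary.

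The main obstacle is essentially nonexistent, and the only subtlety worth flagging explicitly is the overloading of the symbol $m$: it denotes both the number of edges in the MAX-CUT instance and the common value of the intra-cloud and intra-edge costs in condition (2). Consequently, condition (2) should be read as an instance-dependent constant tied to the input graph rather than a universal constant, and no further work is needed beyond pointing this out.
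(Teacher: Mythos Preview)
Your proposal is correct and matches the paper's approach exactly: the paper does not give a separate proof for the corollary but simply observes that it follows from the same transformation used for Theorem~\ref{thm:max-cut}, whose constructed instances already satisfy conditions (1)--(3). Your additional remark about the overloaded symbol $m$ is a welcome clarification that the paper leaves implicit.
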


Interestingly, from the NP-hardness proof, we observe that the hardness
of \emph{symmetric offloading problem} comes from the lack of relationship of the communication costs. In fact, the communication assumption essentially decides whether the offloading problem is NP-hard: When the assumption is true, even \emph{offloading-LCCA}, a more general version that is asymmetric and with a real-time constraint can be solved in polynomial time; when without the assumption, even the symmetric version of the problem is NP-hard.

In previous literature, Du et al. \cite{du2020algorithmics} presented  an NP-hard proof for
the NP-hardness of the offloading problem with \textit{asymmetric} communication
costs. It remains open whether
the offloading problem is NP-hard when the inter-partition cost is
symmetric. Thus, our result close the open problem and is more general in comparison.

\section{Latency-Constrained\emph{ Offloading-CommA} \label{sec:Latency-Constrained-Offloading-C}}

In this section, we first introduce an equivalent form of \emph{Offloading-LCCA}, and then arguably show that it can be modeled with a submodular function. Thus, by employing the existing polynomial algorithm for submodular
minimization \cite{iwata2001combinatorial}, we obtain a polynomial-time algorithm that always produces optimal solutions
for \emph{Offloading-LCCA}. Lastly, we illustrate an instance of \emph{Offloading-LCCA}
for which the previous algorithm based on Min-Cut \cite{du2020algorithmics}
fails while our algorithm succeeds in finding the optimum solution, indicating that our method is more general in comparison.

\subsection{Function of Cost Difference }

Observing the hardness of analyzing the original objective function
of \emph{Offloading-LCCA}, we describe an equivalently complementary form of \emph{Offloading-LCCA} by considering the cost \textit{increment} when respectively executing task sets $X\subseteq\left(V\backslash T\right)$ and $V\setminus X$ in the cloud and the edge, 
{comparing with offloading all the tasks $V$ including $X$ 
from the cloud to the edge server}. 
Let $\Gamma\left(Y\right)=C_{total}\left(V\setminus Y, \,Y\right)$ for briefness. That is, $\Gamma\left(Y\right)$ denotes the total cost when executing $Y$ in the cloud and the other part in the edge. Then,
we define the function $F\left(X\right)$
capturing the cost \textit{increment}  formally  in the following:
\begin{alignat}{1}
F\left(X\right) = & \Gamma(X)-\Gamma(\emptyset) \\
= & C_{total}\left(V\setminus X, \,X\right)-C_{total}\left(V,\,\emptyset\right).\label{eq:F(X)}
\end{alignat}
In particular, we have $F\left(\emptyset\right)=0$. Then we 
show finding an optimum $X$ with $F\left(X\right)$ minimized is
equivalent to \emph{Offloading-LCCA,} as stated below:
\begin{lem}
\label{lem:min} Offloading-LCCA can be optimally solved by finding
$X^{*}$ with 
\[F\left(X^{*}\right)=\min_{X\subseteq V\setminus T}\left\{ F\left(X\right)\right\}\leq 0.\]
\end{lem}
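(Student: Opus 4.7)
The plan is to establish that minimizing $F(X)$ over $X \subseteq V \setminus T$ is equivalent to solving \emph{Offloading-LCCA}, and that the optimum value cannot exceed zero. The argument has two independent pieces: a reduction of the objective, and a correspondence between the domain $\{X : X \subseteq V \setminus T\}$ and the set of feasible Offloading-LCCA partitions.

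First, I would observe that the term $C_{total}(V, \emptyset)$ appearing in the definition of $F$ depends only on the instance and not on $X$; it is the cost of executing every task at the edge server. Consequently, for any two candidate subsets $X_1, X_2$ we have $F(X_1) \le F(X_2)$ iff $C_{total}(V \setminus X_1, X_1) \le C_{total}(V \setminus X_2, X_2)$, so minimizing $F$ and minimizing $C_{total}$ over the same domain are equivalent problems with the same argmin.

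Next, I would verify the domain correspondence. A feasible Offloading-LCCA solution is a partition $(V_{edg}, V_{cld})$ of $V$ with $T \subseteq V_{edg}$, i.e., every latency-constrained task is executed at the edge. Associating each feasible solution with $X := V_{cld}$ yields a bijection between feasible partitions and subsets $X \subseteq V \setminus T$: the condition $T \subseteq V_{edg}$ translates precisely into $X \cap T = \emptyset$, while the partition is recovered by setting $V_{edg} = V \setminus X$ and $V_{cld} = X$. Under this bijection the cost is $C_{total}(V \setminus X, X) = \Gamma(X)$, so an optimum $X^* \in \arg\min_{X \subseteq V \setminus T} F(X)$ yields an optimum Offloading-LCCA partition $(V \setminus X^*, X^*)$ and vice versa.

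Finally, for the bound $F(X^*) \le 0$, I would use that $\emptyset \subseteq V \setminus T$ is in the feasible domain and that by construction $F(\emptyset) = \Gamma(\emptyset) - \Gamma(\emptyset) = 0$; hence $\min_{X \subseteq V \setminus T} F(X) \le F(\emptyset) = 0$. There is no real obstacle in this proof; the only subtlety is to state carefully why the constant shift by $C_{total}(V, \emptyset)$ preserves optimality and why the domain restriction $X \subseteq V \setminus T$ is exactly equivalent to the latency constraint $T \subseteq V_{edg}$. I would therefore keep the writeup compact, letting the two equalities and the bijection do all the work.
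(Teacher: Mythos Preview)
Your proposal is correct and follows essentially the same approach as the paper: both argue that $\Gamma(\emptyset)=C_{total}(V,\emptyset)$ is a fixed constant, so minimizing $F(X)=\Gamma(X)-\Gamma(\emptyset)$ over $X\subseteq V\setminus T$ is equivalent to minimizing the total cost, and both obtain the bound from $F(\emptyset)=0$. Your write-up is in fact slightly more careful than the paper's, since you make explicit the bijection between subsets $X\subseteq V\setminus T$ and feasible latency-constrained partitions $(V_{edg},V_{cld})$ with $T\subseteq V_{edg}$, a point the paper leaves implicit.
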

\begin{proof}
First of all, following the minimality of $F\left(X^{*}\right)$, we immediately have 
\[F\left(X^{*}\right)\leq F\left(\emptyset\right)=0. \]
Let $SOL$ denote a solution to \emph{Offloading-LCCA}. Recall that
we initially offload all the 
tasks $V$ to the edge server, where the initial cost is denoted
by $\Gamma\left(\emptyset\right)=C_{total}\left(V,\,\emptyset\right)$.
Assume that we offload a set of tasks $X$ to the cloud server afterward,
then 
\[
f\left(SOL\right)=\Gamma\left(\emptyset\right)+F\left(X\right).
\]
Thus, the cost of $\Gamma\left(\emptyset\right)$ is  fixed  as $V$
is given at the beginning. Therefore, we need only to find $X^{*}$
with minimized $F\left(X\right)$ so as to minimize $f\left(SOL\right)$.
This completes the proof.
\end{proof}

\subsection{Submodularity of the function }
Formally, a submodular function can be defined as follows:
\begin{defn}
\cite{schrijver2003combinatorial} \label{def:submodular function}
Let ${\displaystyle \Omega}$ be a finite set and ${\displaystyle 2^{\Omega}}$
denote the power set of ${\displaystyle \Omega}$. Then set function
${\displaystyle F:2^{\Omega}\rightarrow\mathbb{R}}$ is submodular
if and only if it satisfies the following condition: For every ${\displaystyle A,\,B\subseteq\Omega}$
with ${\displaystyle A\subseteq B}$ and every ${\displaystyle x\in\Omega\setminus B}$,
we have that ${\displaystyle F\left(A\cup\left\{ x\right\} \right)-F\left(A\right)}$$\geq F\left(B\cup\left\{ x\right\} \right)-F\left(B\right)$.
\end{defn}

Then we  show the main result of this section that the \textit{cost increment}
function $F\left(X\right)$ as in Equality (\ref{eq:F(X)}) satisfies
the above condition and hence is submodular. Recall that $F\left(X\right)$
represents the cost increment incurred by offloading the task set
$X\subseteq\left(V\backslash T\right)$ from the edge servers to the cloud,
we then immediately conclude that \emph{Offloading-LCCA} can be optimally solved in polynomial runtime by employing submodular minimization.
\begin{thm}
\label{thm:submodular minimization} The cost increment function \textup{$F\left(X\right)$}
of Offloading-LCCA is submodular.
\end{thm}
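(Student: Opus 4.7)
The plan is to exploit the additive decomposition of $F(X)$ together with the closure of submodularity under non-negative addition and under addition of modular functions, reducing the whole theorem to a single per-edge inequality. Using Equality (\ref{eq:Ctotal}) I would first isolate the computation part of $F(X)$, which telescopes to $\sum_{i \in I(X)}(w_i^{cld} - w_i^{edg} - t_i)$ up to a constant; this is modular in $X$ and hence trivially submodular. The communication part splits edge-by-edge as $\sum_{e(v_i,v_j)\in E} g_{ij}(X)$, where $g_{ij}$ is a four-valued function that depends only on $X \cap \{v_i, v_j\}$ and takes values $0$, $l_{ij}^{3} - l_{ij}^{1}$, $l_{ij}^{2} - l_{ij}^{1}$, and $l_{ij}^{4} - l_{ij}^{1}$ in the four cases $X \cap \{v_i, v_j\} = \emptyset, \{v_i\}, \{v_j\}, \{v_i, v_j\}$, respectively. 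So it suffices to prove that every $g_{ij}$ is submodular on $2^{V \setminus T}$.

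For the per-edge check I would apply Definition \ref{def:submodular function} pointwise. Fix $A \subseteq B \subseteq V \setminus T$ and $x \in (V \setminus T) \setminus B$. When $x \notin \{v_i, v_j\}$ both marginals of $g_{ij}$ are $0$ and the inequality is trivial. By the symmetry of the argument in $v_i$ and $v_j$, it suffices to handle $x = v_i$, which yields three sub-cases according to the location of $v_j$: (i) $v_j \notin B$, in which both marginals equal $l_{ij}^{3} - l_{ij}^{1}$; (ii) $v_j \in A$, in which both marginals equal $l_{ij}^{4} - l_{ij}^{2}$; (iii) $v_j \in B \setminus A$, in which the marginal at $A$ is $l_{ij}^{3} - l_{ij}^{1}$ while the marginal at $B$ is $l_{ij}^{4} - l_{ij}^{2}$. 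Cases (i) and (ii) are automatic, so the entire theorem collapses to the quadrangle-type inequality $l_{ij}^{1} + l_{ij}^{4} \le l_{ij}^{2} + l_{ij}^{3}$ on every edge.

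The main obstacle is precisely this quadrangle inequality, which is the point at which the communication assumption has to do its work. From $l_{ij}^{4} \le l_{ij}^{2}, l_{ij}^{3}$, paired with the companion bound $l_{ij}^{1} \le l_{ij}^{2}, l_{ij}^{3}$ (the natural reading suggested by the paper's introduction, namely that both intra-side costs are dominated by the cross-partition ones), the desired inequality follows by, e.g., summing $l_{ij}^{1} \le l_{ij}^{2}$ with $l_{ij}^{4} \le l_{ij}^{3}$. Once that is in hand, adding the submodular $g_{ij}$'s back to the modular computation term yields the submodularity of $F$; combined with Lemma \ref{lem:min} and the strongly polynomial combinatorial algorithm of Iwata \textit{et al.}~\cite{iwata2001combinatorial}, this simultaneously closes the section by furnishing an exact polynomial-time algorithm for Offloading-LCCA.
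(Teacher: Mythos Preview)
Your argument is correct and lands on the same inequality the paper needs, but the organization differs. The paper expands $\Gamma(A\cup\{v_m\})-\Gamma(A)$ and $\Gamma(B\cup\{v_m\})-\Gamma(B)$ globally, cancels terms, and obtains a sum over $j\in B\setminus A$ of two differences, each handled separately: one is inter-partition minus intra-cloud ($l^{3}-l^{4}$), the other inter-partition minus intra-edge ($l^{2}-l^{1}$). Your edge-by-edge decomposition via closure of submodularity under sums is cleaner: it isolates exactly the quadrangle condition $l_{ij}^{1}+l_{ij}^{4}\le l_{ij}^{2}+l_{ij}^{3}$, which is in fact a strictly weaker hypothesis than the pair $l^{1}\le l^{2},l^{3}$ and $l^{4}\le l^{2},l^{3}$ that both you and the paper ultimately invoke. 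Two small remarks: your appeal to ``symmetry in $v_i$ and $v_j$'' is not literal symmetry since $l_{ij}^{2}\neq l_{ij}^{3}$ in general, but the swapped case $x=v_j$ does reduce to the same quadrangle inequality, so the conclusion stands; and you are right that the bound $l_{ij}^{1}\le l_{ij}^{2},l_{ij}^{3}$ is needed on top of the formally stated $l_{ij}^{4}\le l_{ij}^{2},l_{ij}^{3}$---the paper's own proof tacitly uses it for the second summand of Equality~(\ref{eq:final}), consistent with the informal description and the example in Fig.~\ref{fig:A 7 node graph}.
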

\begin{proof}
Assume that all tasks of $V$ are initially located 
at the edge server, including the tasks of $T$ that must be executed at the edge
server because of the latency constraint. We consider two sets $A$
and $B$, $A\subseteq B\subseteq\left(V\backslash T\right)$. Let $v_{m}\in\left(V\backslash T\right)\backslash B$ be the task to be offloaded.

Following Definition \ref{def:submodular function}, we
use $F:2^{V}\rightarrow \mathbb{R}$ as the function for capturing
the cost decrement of offloading, where ${\displaystyle 2^{V}}$ denotes
the power set of ${\displaystyle V}$. Then it remains only to show
\[F\left(A\cup\left\{ v_{m}\right\} \right)-F\left(A\right)\ge F\left(B\cup\left\{ v_{m}\right\} \right)-F\left(B\right).\]

For the left side,
we have
\begin{align}
 & F\left(A\cup\left\{ v_{m}\right\} \right)-F\left(A\right)\label{eq:F2=00005Cgamma}\\
= & \left(\Gamma\left(A\cup\left\{ v_{m}\right\} \right)-\Gamma(\emptyset)\right)-\left(\Gamma\left(A\right)-\Gamma(\emptyset)\right).\nonumber\\
= & \Gamma\left(A\cup\left\{ v_{m}\right\} \right)-\Gamma\left(A\right).\nonumber
\end{align}

By the definition of $\left(\Gamma\right)$ and by denoting $T_{1}=A\cup\left\{ v_{m}\right\}$, we have
\begin{alignat}{2}
 & \Gamma\left(T_{1}=A\cup\left\{ v_{m}\right\} \right) \nonumber\\
  =  & C_{comp}\left(V\setminus T_{1},\text{\,}T_{1}\right)
   +C_{comm}\left(V\setminus T_{1},\,T_{1}\right).\label{eq:wholeT1}
\end{alignat}
By the definition of $C_{comp}\left(\cdot\right)$, we have
\begin{equation}
C_{comp}\left(V\setminus T_{1},\text{\,}T_{1}\right)=\underset{i\in I\left(T_{1}\right)}{\sum}w_{i}^{cld}+\underset{i\in I\left(V\setminus T_{1}\right)}{\sum}w_{i}^{edg}.\label{eq:comp}
\end{equation}
Similarly, by the definition of $C_{comm}\left(\cdot\right)$, we
have

\noindent{\begin{eqnarray}
 &  & C_{comm}\left(V\setminus T_{1},\,T_{1}\right) \label{eq:comm}\\
 &  =& C_{comm}\left(E_{V\setminus T_{1},\,T_{1}}\right)+C_{comm}\left(G\left[T_{1}\right]\right) \nonumber \\
 && +C_{comm}\left(G\left[V\setminus T_{1}\right]\right)\nonumber \\
 & = & \underset{V_{cld}=T_{1}=A\cup\left\{ v_{m}\right\} }{\underbrace{\underset{j\in I\text{\ensuremath{\left(V\setminus T_{1}\right)}}}{\sum}\underset{i\in I\left(T_{1}\right)}{\sum}l\left(v_{i},\,v_{j}\right)+\underset{j\in I\left(T_{1}\right)}{\sum}\underset{i\in I\left(T_{1}\right)}{\sum}l\left(v_{i},\,v_{j}\right)}}\nonumber \\
 & &+\underset{V_{cld}=T_{1}=A\cup\left\{ v_{m}\right\} }{\underbrace{\underset{j\in I\left(V\setminus T_{1}\right)}{\sum}\underset{i\in I\left(V\setminus T_{1}\right)}{\sum}l\left(v_{i},\,v_{j}\right)}},\nonumber
\end{eqnarray}}
where $l\left(v_{i},\,v_{j}\right)$ represents the corresponding
communication cost between task $v_{i}$ and $v_{j}$ as stated in
Section \ref{sec:Preliminaries}, and if there is no communication
requirement between task $v_{i}$ and $v_{j}$ or $i=j$, we have
$l\left(v_{i},\,v_{j}\right)=0$.

Combining Equalities (\ref{eq:wholeT1}) (\ref{eq:comp}) and (\ref{eq:comm})
immediately yields:
\begin{align}
 & \Gamma\left(T_{1}=A\cup\left\{ v_{m}\right\} \right)\label{eq:T1}\\
= & \underset{i\in I\left(T_{1}\right)}{\sum}w_{i}^{cld}+\underset{i\in I\left(V\setminus T_{1}\right)}{\sum}w_{i}^{edg}+\underset{V_{cld}=T_{1}=A\cup\left\{ v_{m}\right\} }{\underbrace{\underset{j\in I\text{\ensuremath{\left(V\setminus T_{1}\right)}}}{\sum}\underset{i\in I\left(T_{1}\right)}{\sum}l\left(v_{i},\,v_{j}\right)}} \nonumber\\
 & +\underset{V_{cld}=T_{1}=A\cup\left\{ v_{m}\right\} }{\underbrace{\underset{j\in I\left(T_{1}\right)}{\sum}\underset{i\in I\left(T_{1}\right)}{\sum}l\left(v_{i},\,v_{j}\right)+\underset{j\in I\left(V\setminus T_{1}\right)}{\sum}\underset{i\in I\left(V\setminus T_{1}\right)}{\sum}l\left(v_{i},\,v_{j}\right)}.}\nonumber
\end{align}

Similarly, for $A$, we have
\begin{align}
 & \Gamma\left(T_{2}=A\right)\label{eq:T2}\\
= & \underset{i\in I\left(T_{2}\right)}{\sum}w_{i}^{cld}+\underset{i\in I\left(V\setminus T_{2}\right)}{\sum}w_{i}^{edg}+C_{comm}\left(E_{V\setminus T_{2},\,T_{2}}\right)\nonumber \\
 & +C_{comm}\left(G\left[T_{2}\right]\right)+C_{comm}\left(G\left[V\setminus T_{2}\right]\right)\nonumber \\
= & \underset{i\in I\left(T_{2}\right)}{\sum}w_{i}^{cld}+\underset{i\in I\left(V\setminus T_{2}\right)}{\sum}w_{i}^{edg}+\underset{V_{cld}=T_{2}=A}{\underbrace{\underset{j\in I\text{\ensuremath{\left(V\setminus T_{2}\right)}}}{\sum}\underset{i\in I\left(T_{2}\right)}{\sum}l\left(v_{i},\,v_{j}\right)}}\nonumber \\
 & +\underset{V_{cld}=T_{2}=A}{\underbrace{\underset{j\in I\left(T_{2}\right)}{\sum}\underset{i\in I\left(T_{2}\right)}{\sum}l\left(v_{i},\,v_{j}\right)+\underset{j\in I\left(V\setminus T_{2}\right)}{\sum}\underset{i\in I\left(V\setminus T_{2}\right)}{\sum}l\left(v_{i},\,v_{j}\right)}}.\nonumber
\end{align}

Then by comparing Equality (\ref{eq:T1}) and (\ref{eq:T2}), we obtain
\begin{align}
 & F\left(A\cup\left\{ v_{m}\right\} \right)-F\left(A\right)\label{eq:FA}\\
= & C_{comp}\left(V\setminus T_{2},\text{\,}T_{2}\right)+C_{comm}\left(V\setminus T_{2},\,T_{2}\right)\nonumber \\
= & \left(w_{m}^{edg}-w_{m}^{cld}\right)+\underset{V_{cld}=A\cup\left\{ v_{m}\right\} }{\underbrace{\underset{j\in I\left(A\right)}{\sum}l\left(v_{m},\,v_{j}\right)+\underset{j\in I\left(V\setminus A\setminus T\right)}{\sum}l\left(v_{m},\,v_{j}\right)}}\nonumber \\
 & -\underset{V_{cld}=A}{\underbrace{\underset{j\in I\left(A\right)}{\sum}l\left(v_{m},\,v_{j}\right)-\underset{j\in I\left(V\setminus A\setminus T\right)}{\sum}l\left(v_{m},\,v_{j}\right)}.}\nonumber
\end{align}

\begin{figure*}[!h]
\centering
\subfloat[Input of \emph{Offloading-LCCA} with $T=\left\{ v_{1},\,v_{6}\right\}$
\label{fig:A 7 node graph}]{\centering{}\includegraphics[scale=0.8]{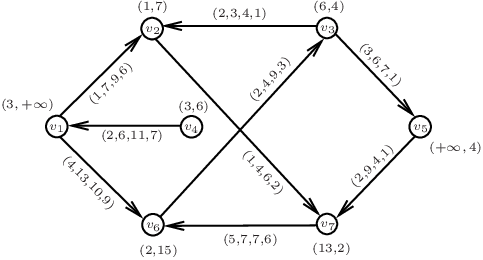}}
$\qquad\qquad$\subfloat[A possible solution for \emph{Offloading-LCCA} \label{fig:a possible offloading solution}]{\includegraphics[scale=0.8]{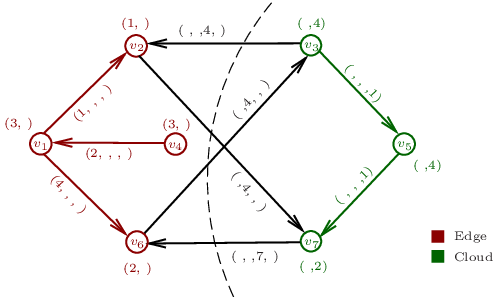}}

\caption{An instance of \emph{Offloading-LCCA} with a solution \label{fig:offloading model}}
\end{figure*}

Similarly, we have
\begin{alignat}{1}
 & F\left(B\cup\left\{ v_{m}\right\} \right)-F\left(B\right)\label{eq:FB}\\
= & \Gamma\left(B\cup\left\{ v_{m}\right\} \right)-\Gamma\left(B\right)\nonumber \\
= & \left(w_{m}^{edg}-w_{m}^{cld}\right)+\underset{V_{cld}=B\cup\left\{ v_{m}\right\} }{\underbrace{\underset{j\in I\left(B\right)}{\sum}l\left(v_{m},\,v_{j}\right)+\underset{j\in I\left(V\setminus B\setminus T\right)}{\sum}l\left(v_{m},\,v_{j}\right)}}\nonumber \\
 & -\underset{V_{cld}=B}{\underbrace{\underset{j\in I\left(B\right)}{\sum}l\left(v_{m},\,v_{j}\right)-\underset{j\in I\left(V\setminus B\setminus T\right)}{\sum}l\left(v_{m},\,v_{j}\right)}.}\nonumber
\end{alignat}
Then, we get the following conclusion by immediately combining Equality (\ref{eq:FA}) with (\ref{eq:FB}) and simple calculation:
\begin{alignat}{1}
 & \left(F\left(A\cup\left\{ v_{m}\right\} \right)-F\left(A\right)\right)-\left(F\left(B\cup\left\{ v_{m}\right\} \right)-F\left(B\right)\right)\nonumber \\
= & \underset{j\in I\left(B\setminus A\right)}{\sum}\Biggl(\underset{V_{cld}=A\cup\left\{ v_{m}\right\} }{\underbrace{l\left(v_{m},\,v_{j}\right)}}-\underset{V_{cld}=B\cup\left\{ v_{m}\right\} }{\underbrace{l\left(v_{m},\,v_{j}\right)}}\Biggr)\nonumber \\
  & +\underset{j\in I\left(B\setminus A\right)}{\sum}\Biggl(\underset{V_{cld}=B}{\underbrace{l\left(v_{m},\,v_{j}\right)}}-\underset{V_{cld}=A}{\underbrace{l\left(v_{m},\,v_{j}\right)}}\Biggr)\text{.}\label{eq:final}
\end{alignat}

\begin{figure*}[ht]\centering
\subfloat[An instance of \emph{Offloading-CommA}\label{fig:An offloading model}]{\centering{}\includegraphics[scale=1.1]{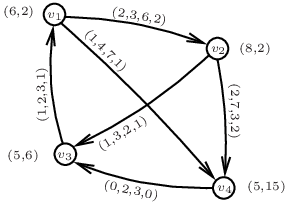}}
\centering{}$\qquad$\subfloat[The output of our algorithm based on submodular minimization\label{fig:submodular}]{\centering{}\includegraphics[scale=1.1]{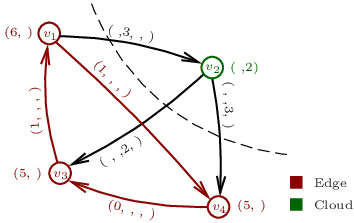}}

$\qquad$\subfloat[\label{fig:min-cut} Failure of  Min-Cut based algorithm:  can not  determine
cost of  blue edges]{\noindent \centering{}\includegraphics[scale=1.1]{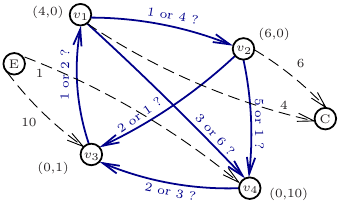}}
\caption{An instance of \emph{Offloading-CommA} with outputs of the algorithms based on submodular minimization and Min-Cut \label{fig:an example min-cut/submodular}}
\end{figure*}

In the above formulation, for the first term of the second line, when
$V_{cld}=A\cup\left\{ v_{m}\right\} $, we have $j\notin I\left(V_{edg}\right)$
since $j\in I\left(B\setminus A\right)$, and hence $m$ and $j$
are in different partitions. So $l\left(v_{m},\,v_{j}\right)$ is inter-partition
communication cost in the case. In contrast, when $V_{cld}=B\cup\left\{ v_{m}\right\} $,
$m$ and $j$ are in the same partition and hence $l\left(v_{m},\,v_{j}\right)$
is intra-cloud communication cost. Following the communication
assumption, for each $j\in I\left(B\setminus A\right)$, we have

\[
\underset{V_{cld}=A\cup\left\{ v_{m}\right\} }{\underbrace{l\left(v_{m},\,v_{j}\right)}}-\underset{V_{cld}=B\cup\left\{ v_{m}\right\} }{\underbrace{l\left(v_{m},\,v_{j}\right)}}\geq0.
\]
So the first term of Equality (\ref{eq:final}) is at least zero, since it sums up the left side of the above inequality over $j\in I\left(B\setminus A\right)$.
Similarly, we also have the second term of Equality (\ref{eq:final})
is not less than zero. This completes the proof.
\end{proof}
By Lemma \ref{lem:min}, \emph{Offloading-LCCA }can be equivalently
solved by computing the following formulation
\begin{equation}
\underset{X\subseteq\left(V\backslash T\right)}{\min}F\left(X\right).\label{eq:submin}
\end{equation}
Moreover, by Theorem \ref{thm:submodular minimization}, the above
formulation is to minimize a submodular function, for which many famous
efficient algorithms exist, including the combinatorial strongly polynomial
algorithm \cite{iwata2001combinatorial}. Hence, we can polynomially
and optimally solve \emph{Offloading-LCCA}.

In particular, when\textit{ $T=\emptyset$}, the \emph{Offloading-LCCA}
problem is actually equivalent to \emph{Offloading-CommA. }That is
because for $T=\emptyset$, two sets $A$ and $B$ for which $A\subseteq B\subseteq V$
holds, and a task $v_{m}\in V\backslash B$, we can similarly get
$F\left(A\cup\left\{ v_{m}\right\} \right)-F\left(A\right)\ge F\left(B\cup\left\{ v_{m}\right\} \right)-F\left(B\right)$.
Thus, for \emph{Offloading-CommA} we have the following corollary:
\begin{cor}
Offloading-CommA can be optimally solved in strongly polynomial time.
\end{cor}

\subsection{Examples of Offloading-LCCA}
\subsubsection{An instance of Offloading-LCCA with a solution}

In Fig. \ref{fig:offloading model}, we demonstrate an instance of the \emph{Offloading-LCCA} problem that consists of
7 nodes with $T=\left\{ v_{1},\,v_{6}\right\}$ together with its
possible solution. In Fig. \ref{fig:A 7 node graph}, the elements of the 2-tuple associated with each node indicate the computation cost of task $v_{i}$ at the edge and the cloud server, respectively. The 4-tuple associated with each directed edge represents the communication cost from task $v_{i}$ to $v_{j}$ of 4 cases: edge to edge, edge to cloud, cloud to the edge, and cloud to cloud, in which the communication assumption $l_{ij}^{1},\,l_{ij}^{4}\le l_{ij}^{2},\,l_{ij}^{3}$
holds. As illustrated in Fig. \ref{fig:a possible offloading solution},
the dashed line indicates a partition dividing the task graph into
two disconnected sets respectively for the edge (in red) and cloud
sever (in green), and each task in $T$ is executed at the edge server.
We use  black lines to denote the communication requirements between the edge and the cloud server, while  the red line for data
communication inside edge servers and the green line for that inside
cloud servers. The  communication cost and computational cost
are labeled as $4$-tuple and $2$-tuple incident to the corresponding
edges and nodes.

\subsubsection{Failure of the Min-Cut Based Algorithm}

We will give a counterexample illustrating how the previous best algorithm \cite{du2020algorithmics}
 fails to solve \emph{Offloading-CommA} when the communication cost is asymmetric. The previous algorithm was shown to work correctly for homogeneous
communication cost (i.e. $l_{ij}^{1}=l_{ij}^{4}\le l_{ij}^{2}=l_{ij}^{3}$), employing a key idea of transforming into Min-Cut.

Fig. \ref{fig:An offloading model} demonstrates an example of 4 tasks
with $l_{ij}^{1}=l_{ij}^{4}\le l_{ij}^{2}\neq l_{ij}^{3}$ for each
link. The computation costs of each node are given in the form of
$\left(w_{i}^{edg},\,w_{i}^{cld}\right)$, where the former and the
latter are the computation costs for executing the task in edge and
cloud server, respectively. Besides, $\left(l_{ij}^{1},\,l_{ij}^{2},\,l_{ij}^{3},\,l_{ij}^{4}\right)$
represent the four costs of the four types of data communication between
task $v_{i}$ and $v_{j}$: edge to edge, edge to cloud, cloud to
edge, and cloud to cloud. Then we execute both our algorithm based
on submodular minimization and the previous algorithm based on Min-Cut.
Fig. \ref{fig:submodular} is the output of our algorithm, which is probably an optimal solution. The dashed line divides the tasks into
two sets, the set of red nodes, say $\left\{ v_{1},\,v_{3},\,v_{4}\right\} $,
will be executed at the edge server and $\left\{ v_{2}\right\} $
the other set (of green nodes) at the cloud server. Consequently,
the red edges turn out to be intra-cloud communication, while the black edges are inter-partition communication. Summing up the computation communication costs, the output of our algorithm consumes a total cost of 28. In contrast, according to the procedure of the algorithm based on Min-Cut \cite{du2020algorithmics}, Fig. \ref{fig:min-cut} is the configuration when the algorithm stops. Thus, the previous
algorithm can not produce a feasible solution for the instance. The reason for the failure is that each edge in the instance could be with four different communication costs, and hence can not transform to the graph in which the Min-Cut algorithm can correctly work. As shown in Fig. \ref{fig:min-cut}, the values of the blue edges can not be determined according to the transformation as in the algorithm based on Min-Cut \cite{du2020algorithmics}.

\begin{figure*}[tbp]
    \centering
    \subfloat[Cost comparison on different datasets]{\includegraphics[width=0.3\textwidth]{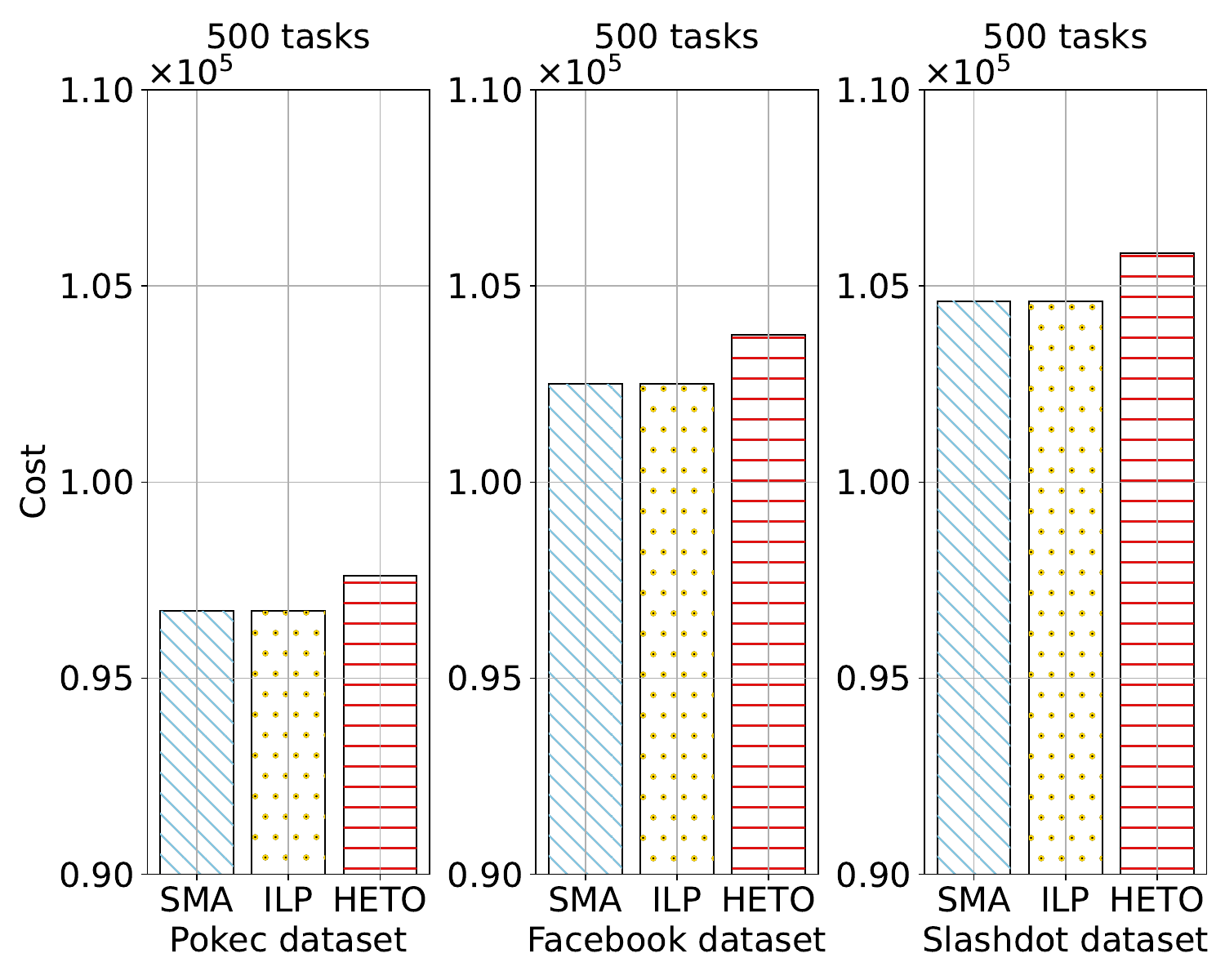}
    \label{fig:snap_different_dataset_cost}} 
    \hfil
    \subfloat[Time efficiency comparison on different datasets]{\includegraphics[width=0.33\textwidth]{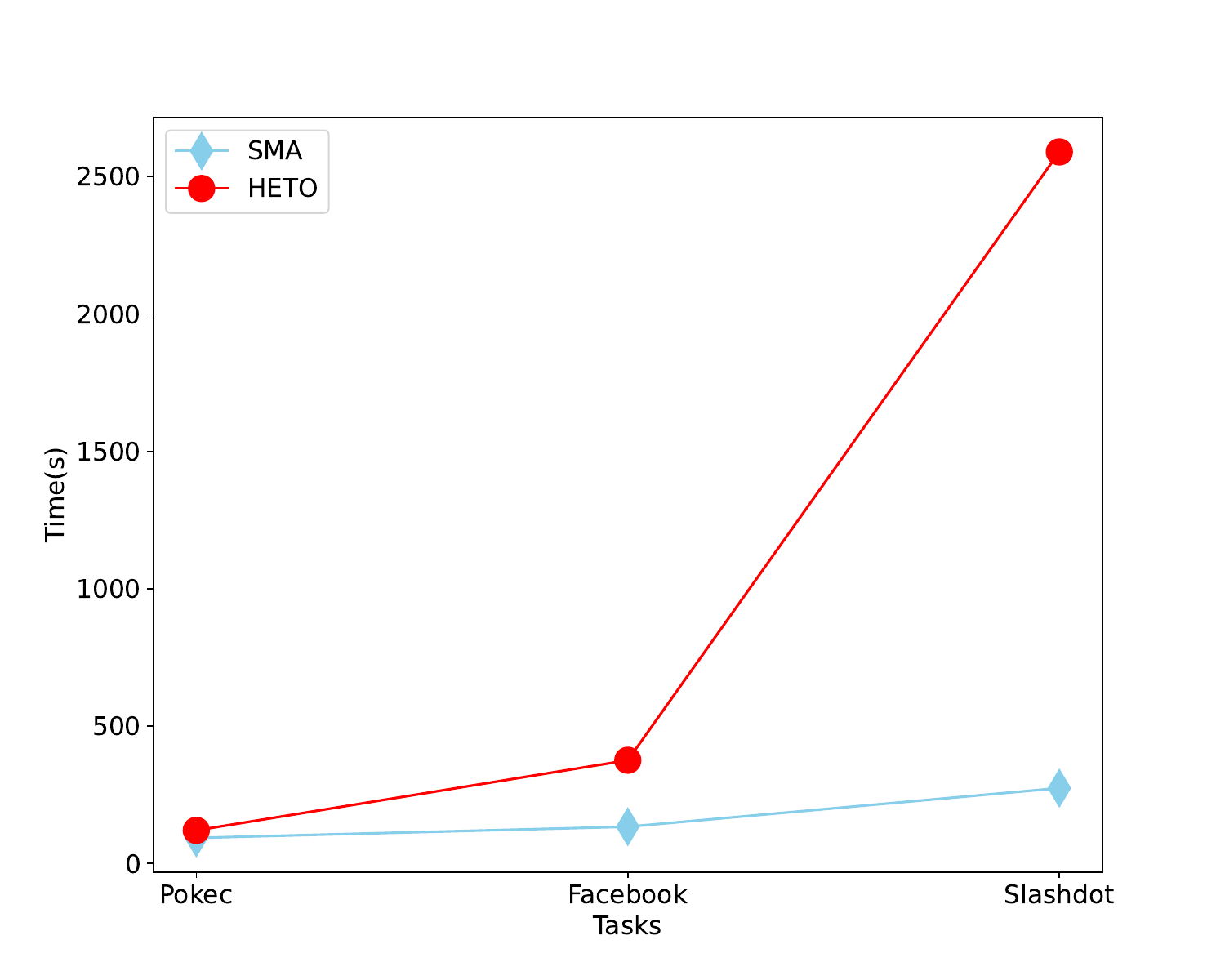}
    \label{fig:snap_different_dataset_time}}
    \caption{Offloading results on the different datasets.}
    \label{fig:snap_different_dataset}
\end{figure*}

\begin{figure*}[tbp]
    \centering
    \subfloat[Cost comparison on different graph size]{\includegraphics[width=0.5\textwidth]{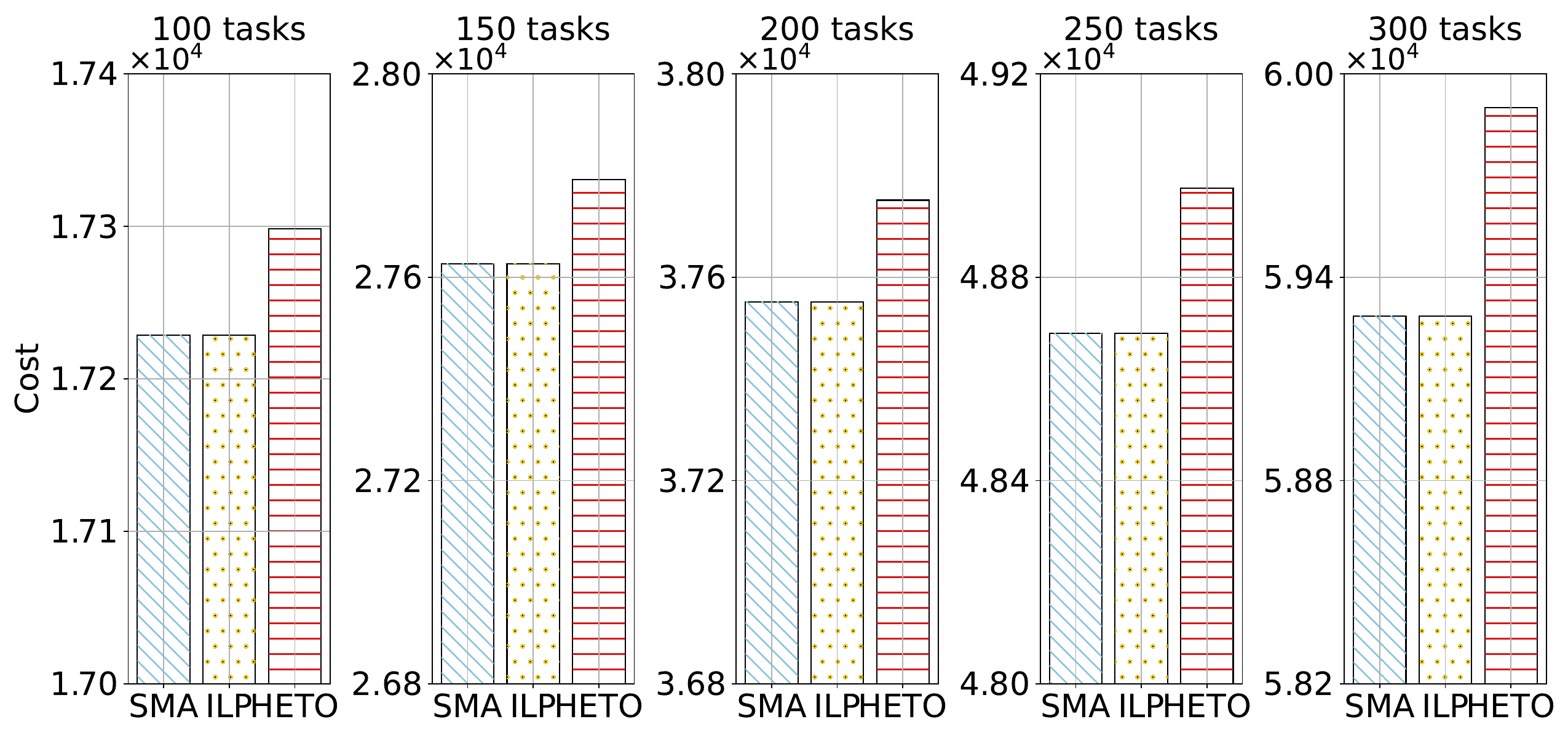}
    \label{fig:snap_nodes_cost}} 
    \hfil \subfloat[Time efficiency comparison on different graph size]{\includegraphics[width=0.36\textwidth]{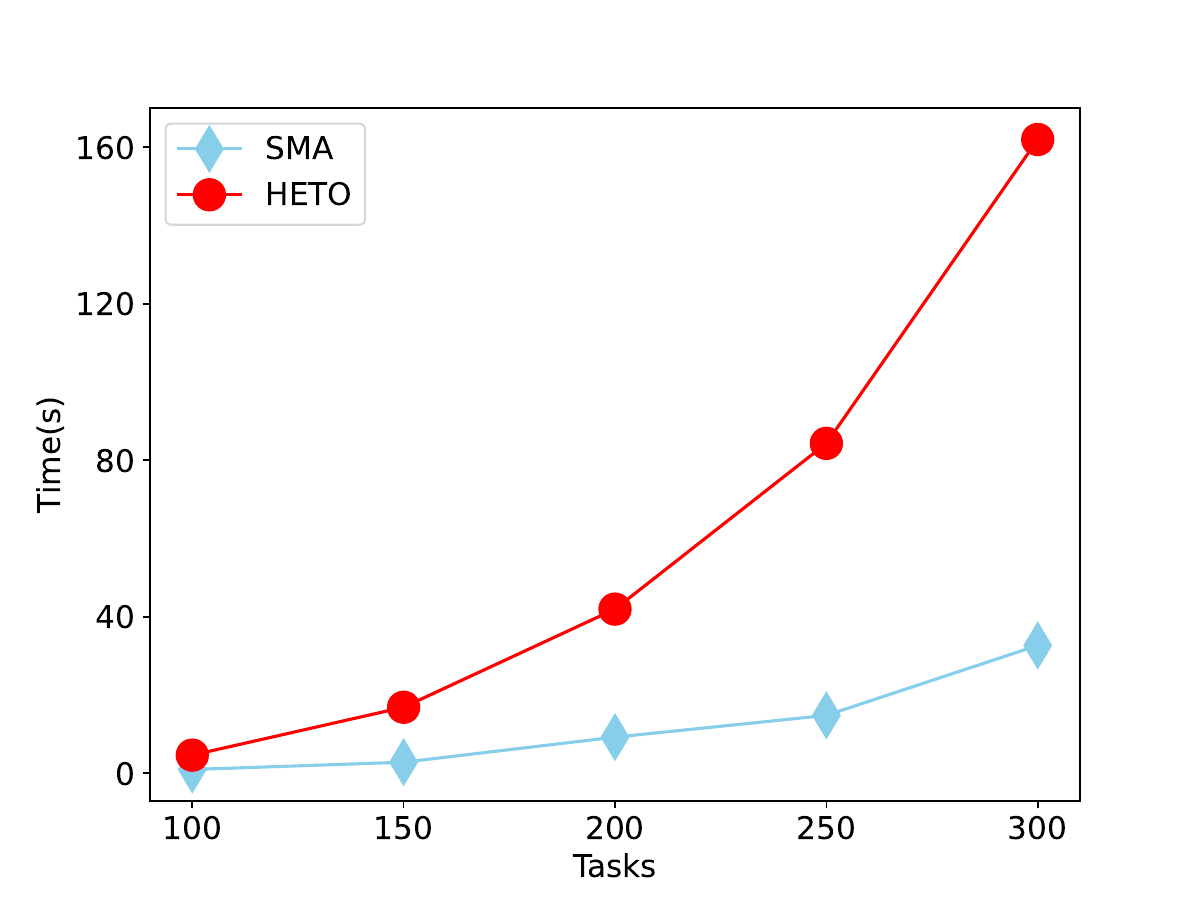}
    \label{fig:snap_nodes_time}}
    \caption{Offloading results over graph datasets of different sizes.}
    \label{fig:snap_cost}
\end{figure*}

\section{Numerical Experiments \label{sec:numerical experiments}}

In this section, we first experiment on several different real-world datasets to evaluate the practical performance of our algorithm based on submodular minimization (denoted by SMA) for \emph{Offloading-CommA}. Then, we further analyze its performance subject to various conditions: (1) Validate the algorithm across different graph sizes to demonstrate its scalability and robustness; (2) Assess the algorithm's performance under different communication cost ratios to verify its adaptability to diverse network conditions; (3) Evaluate the algorithm's efficiency in graphs with varying densities, confirming its capability to handle different levels of complexity. 
All the experiments were conducted under two experimental settings: adhering to the communication assumptions and violating the communication assumptions.

Furthermore,  we validate our SMA’s solution quality by using the solution of Integer Linear Programming (ILP) formulation as a benchmark while also comparing with the HETO algorithm proposed by Du et al. \cite{du2020algorithmics}.  

\subsection{Experimental Setup}
To demonstrate the scalability of the SMA, we conducted experiments using graphs from the SNAP ego-Facebook dataset \cite{leskovec2012learning}, the SNAP soc-Pokec dataset \cite{takac2012data}, the SNAP soc-Slashdot dataset \cite{leskovec2009community} and synthesized datasets. In the experiments, we use graph nodes to represent computing tasks, and edges between nodes to represent communication between tasks. Both computation and communication costs are generated using uniform distributions. 
Each data set was tested across an average of 10 experiments to derive results. All these algorithms were implemented in Python 3.10 while using the SFO toolbox in Matlab \cite{krause2010sfo} to solve the transformed submodular minimization problem optimally. The experiments were carried out on a PC with Microsoft Windows 11, AMD Ryzen 5 4600H, and 16GB 3200MHZ DDR4 Memory.

\subsection{SNAP Dataset}

\descr{Adhering to the Communication Assumption}
\subsubsection{Different dataset}

In this study, we selected three datasets from SNAP: ego-Facebook, soc-Pokec, and soc-Slashdot. From each dataset, we extracted 500 nodes and their connected edges to construct graphs. The number of edges varied across the datasets, with ego-Facebook having 4,337 edges, soc-Pokec having 3,084 edges, and soc-Slashdot having 5,080 edges.

As shown in Fig. \ref{fig:snap_different_dataset_cost}, the costs produced by SMA (depicted in blue) align with those of ILP (depicted in yellow) and are consistently significantly lower than those of HETO (depicted in red). Because it is known that  ILP always provides an optimal solution while SMA and ILP produce identical results, we get that \textbf{SMA can find the optimal offloading strategy when the communication assumptions are valid.} Moreover, the service cost of SMA and ILP is much smaller than that of  HETO. Additionally, Fig. \ref{fig:snap_different_dataset_time} highlights that SMA maintains a lower and more stable runtime across different datasets compared to HETO.

\subsubsection{Different graph sizes}

For this research, we delve into evaluating our algorithm's performance using the SNAP ego-Facebook dataset. Initially, we partitioned the task graphs within the dataset, which contained between 100 and 300 nodes. This approach enabled us to conduct a detailed comparative analysis of task offloading effectiveness among three distinct algorithms: SMA, HETO, and ILP. We specifically focused on understanding how these algorithms perform across various graph sizes, offering insights into their scalability and efficiency in different scenarios.

As illustrated in Fig. \ref{fig:snap_nodes_cost}, the cost incurred by SMA consistently remains significantly lower than that of HETO. As the problem scale increases, the performance gap between the two algorithms widens, clearly indicating the superiority of SMA's offloading strategy over HETO. The Fig. \ref{fig:snap_nodes_time} illustrates the runtime of the SMA, ILP, and HETO across different instances. It can be observed from the figure that SMA requires less time compared to HETO, and the disparity in runtime between the two algorithms increases sharply as the problem size grows. This trend is attributed to the HETO's exponential growth in runtime, which aligns with its time complexity of \( O(E^2) \).  Consequently, \textbf{
SMA not only consistently reaches the optimal solution on different graph sizes, but it also offers a more reduced running time compared to HETO}, facilitating more efficient management of a wide range of datasets.

\begin{figure*}[ht]
    \centering
    \subfloat[Cost comparison for different multiples of communication cost]{\includegraphics[width=0.7526\textwidth]{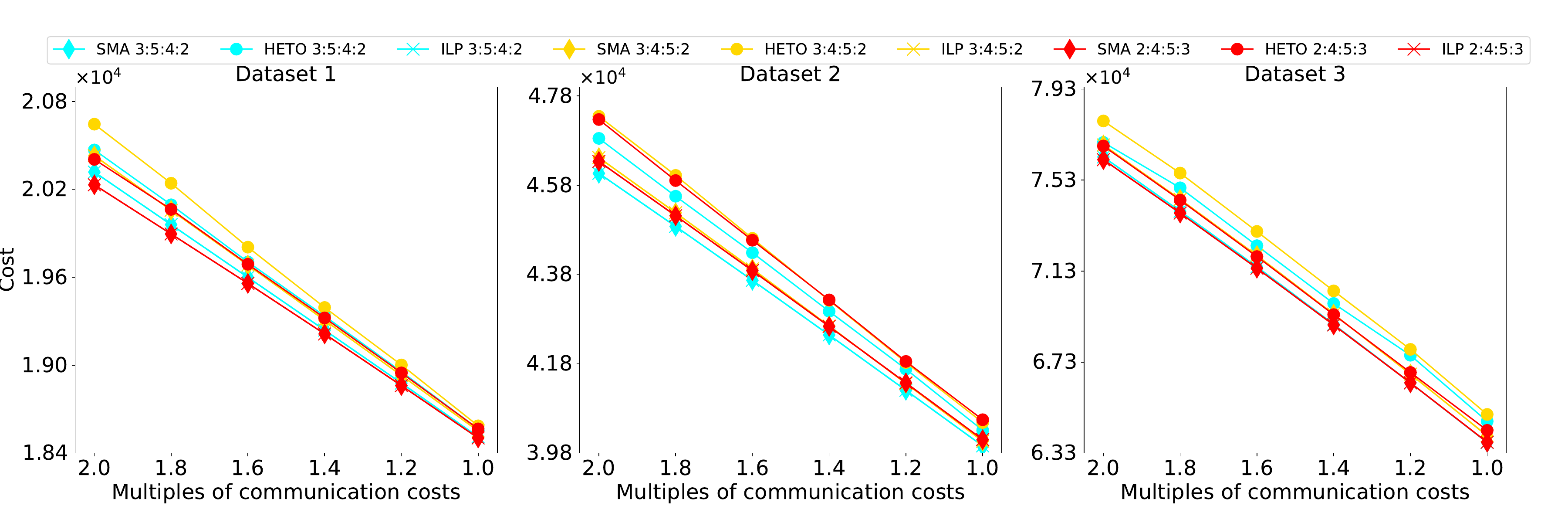}} 
        \hfill
  \subfloat[Time efficiency comparison for different multiples of communication cost]{\includegraphics[width=0.7526\textwidth]{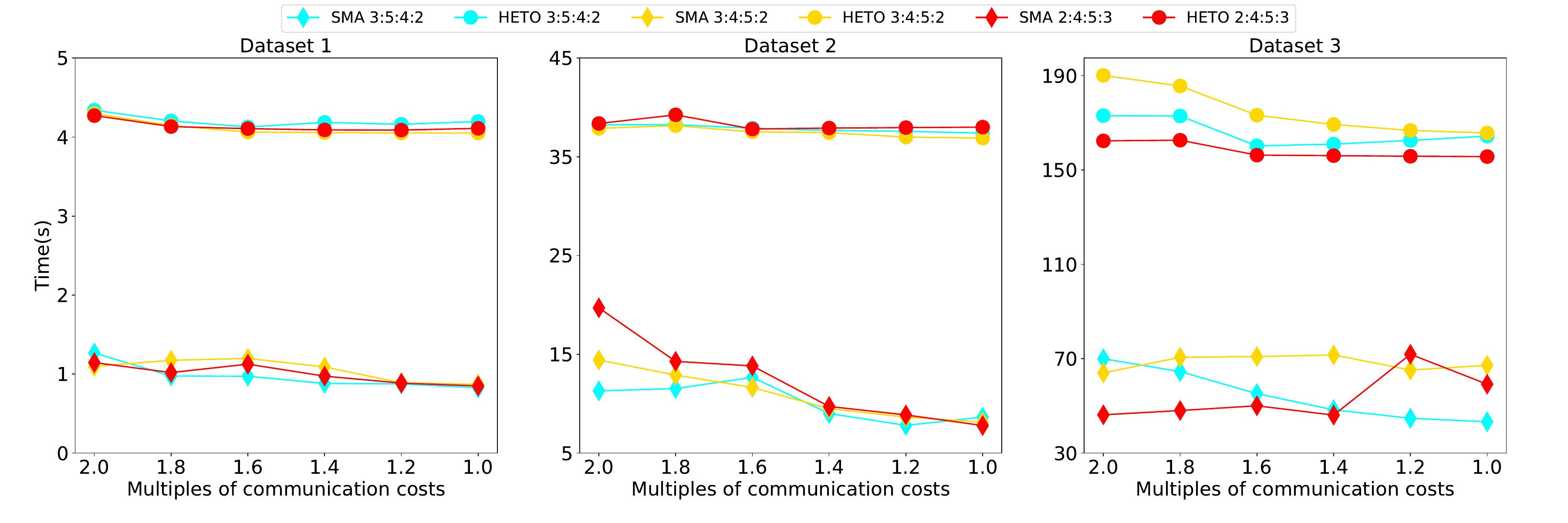}}
        \hfill
    \caption{The influence of communication cost on the performance of algorithms.}
    \label{fig:snap_communication_cost}

\end{figure*}

\subsubsection{Different communication cost ratio}

In this experiment, we will examine how communication costs influence offloading decisions while adhering to communication constraints. To this end, we conducted a series of experiments using three distinct datasets from the ego-Facebook dataset, each consisting of 100, 200, and 300 nodes, respectively. In these experiments, we established varying ratios for communication costs, denoted by \( (l_{ij}^1,l_{ij}^2,l_{ij}^3,l_{ij}^4) \), specifically set at \( (3:5:4:2) \), \( (3:4:5:2) \), and \( (2:4:5:3) \) for each dataset. Moreover, to simulate the enhancement of machine bandwidth in a cloud-edge computing environment, we adjusted the communication cost multipliers for each specified ratio, thereby mimicking the effect of increased bandwidth on offloading efficacy.

As shown in Fig. \ref{fig:snap_communication_cost}, the lines of the same color denote the same ratio of communication costs. When the ratio of intra-cloud and inter-partition communication costs remains fixed and scaled by a constant factor, SMA consistently yields the same offloading strategy with the lowest total cost as ILP, compared to HETO. Additionally, different colors denote varying cost ratios, \textbf{the consistent superiority of SMA's performance, regardless of fluctuations in communication cost ratios.} It underscores its efficacy as a solution for optimizing task offloading amidst diverse and evolving communication constraints in cloud-edge computing environments.
\vspace{\topsep}\\
\descr{Violating the Communication Assumption}

Although our communication assumptions are designed to reflect real-world application scenarios, we have still conducted rigorous testing of the model under conditions where these assumptions may not hold. To this end, we employed the SNAP ego-Facebook graph datasets, featuring 100, 200, and 300 nodes, to conduct a series of experiments with varied communication cost ratios: \((8:6:7:5)\), \((8:5:6:7)\), and \((7:6:5:8)\). The experiments with the ratios \((8:5:6:7)\) and \((7:6:5:8)\) were intentionally chosen to challenge the model under conditions where the communication assumption is violated, thereby evaluating the model's adaptability and robustness in complex environments.

The experimental outcomes in Fig. \ref{fig:snap_ratio} reveal that SMA achieves optimal results when the communication costs satisfy the communication assumption, as seen with \((8:6:7:5)\), consistent with prior theoretical findings. Conversely, under scenarios \((8:5:6:7)\) and \((7:6:5:8)\) where the communication assumption is not true, SMA does not guarantee to compute the optimal solution, provided that the problem is NP-hard at the point. Nevertheless, the performance of SMA still surpasses HETO. Additionally, for each communication cost ratio, SMA consistently outperforms HETO in terms of cost across different graph sizes. Moreover, Fig. \ref{fig:snap_ratio_time_new} demonstrates that \textbf{even without the communication assumption, SMA compares favorably in time efficiency, computing near-optimal solutions faster than HETO.} This highlights that SMA outperforms the state-of-the-art HETO in efficacy and resilience for negative scenarios in which the communication assumption is not true. 


{
\begin{figure*}[h]
    \centering
    \subfloat[Cost comparison on 100 tasks]{\includegraphics[width=0.326\textwidth]{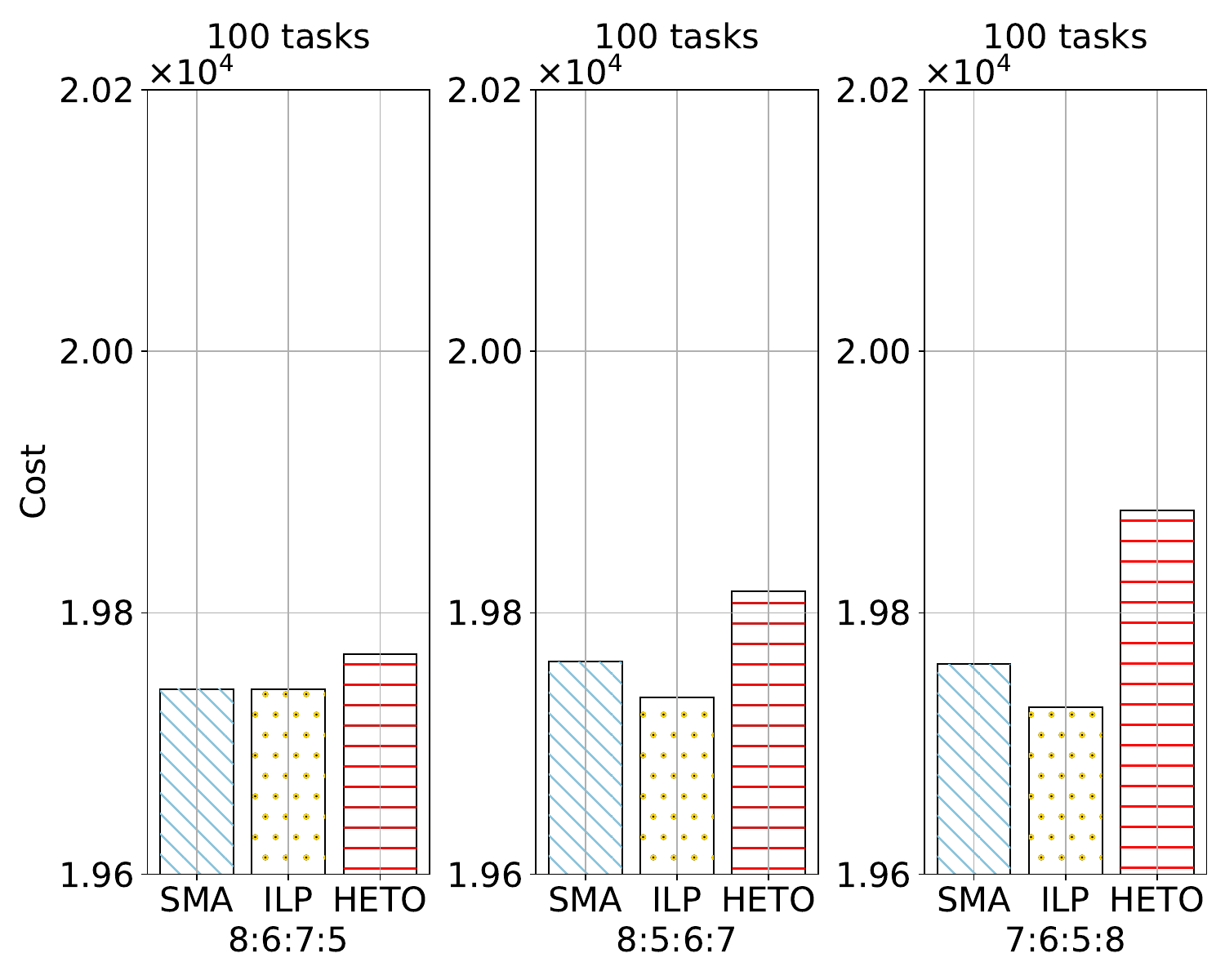}} 
    \hfil
    \subfloat[Cost comparison on 200 tasks]{\includegraphics[width=0.326\textwidth]{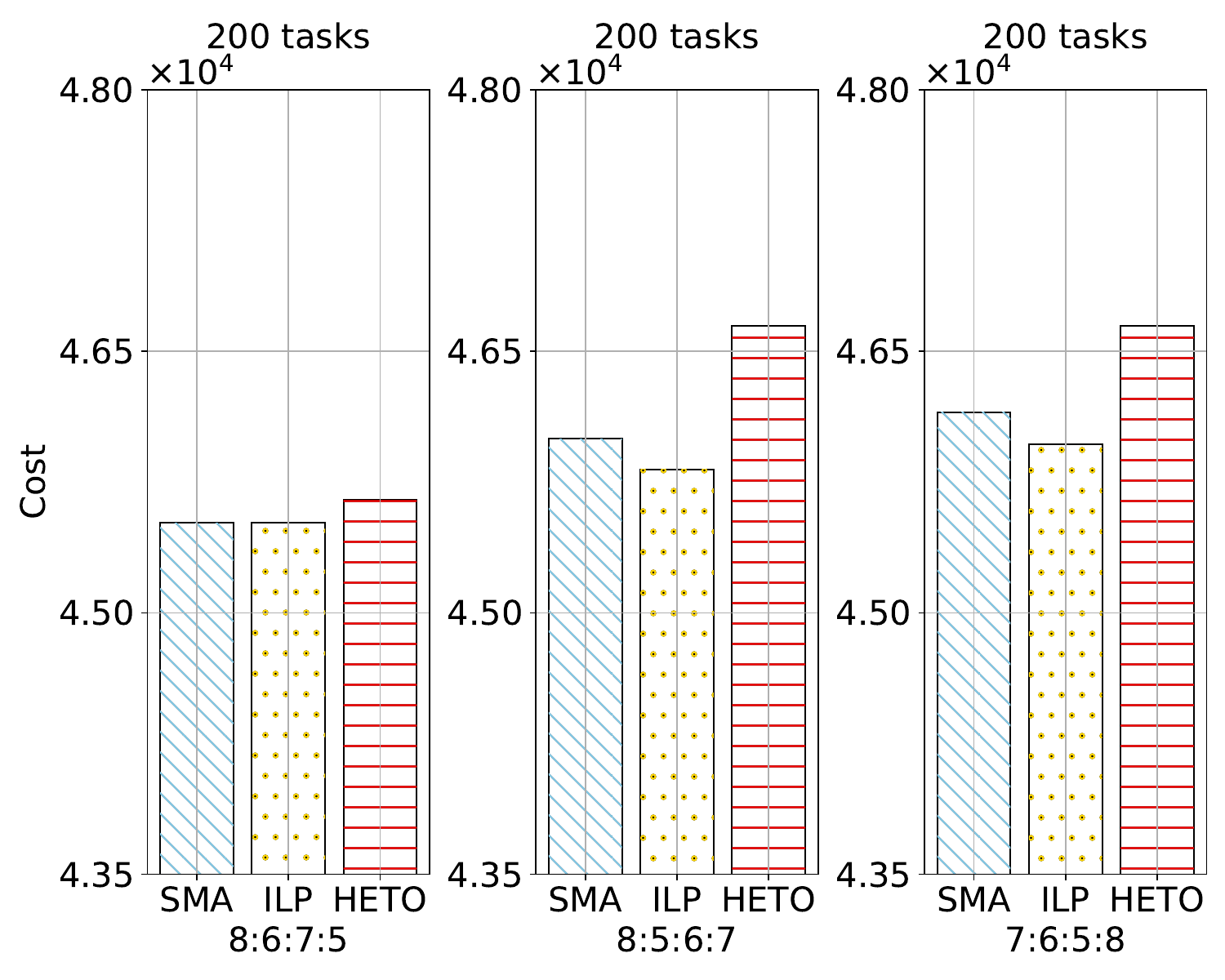}}

   \subfloat[Cost comparison on 300 tasks]{\includegraphics[width=0.326\textwidth]{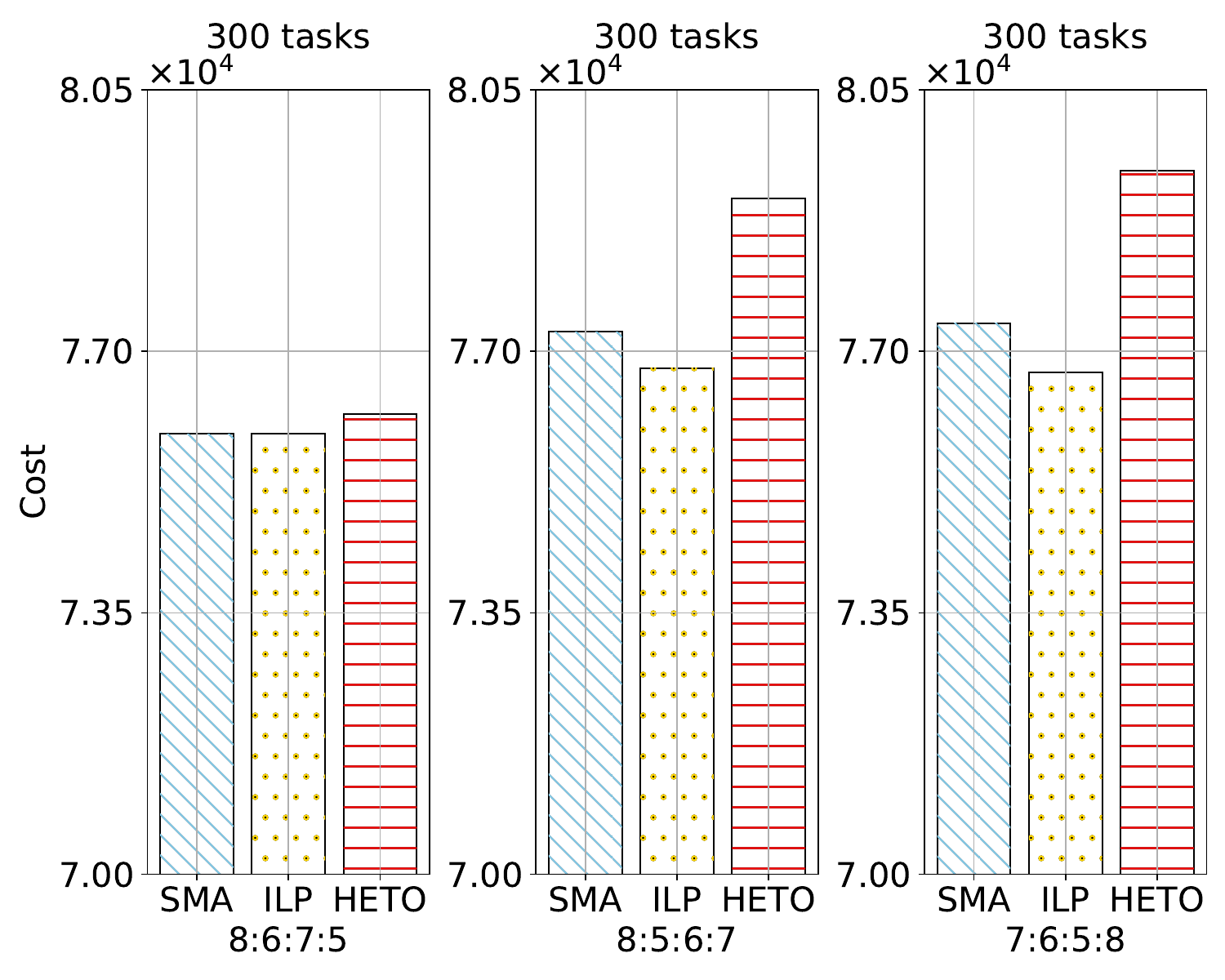}} \hfil
     \subfloat[Time efficiency comparison on different task counts]{\includegraphics[width=0.326\textwidth]{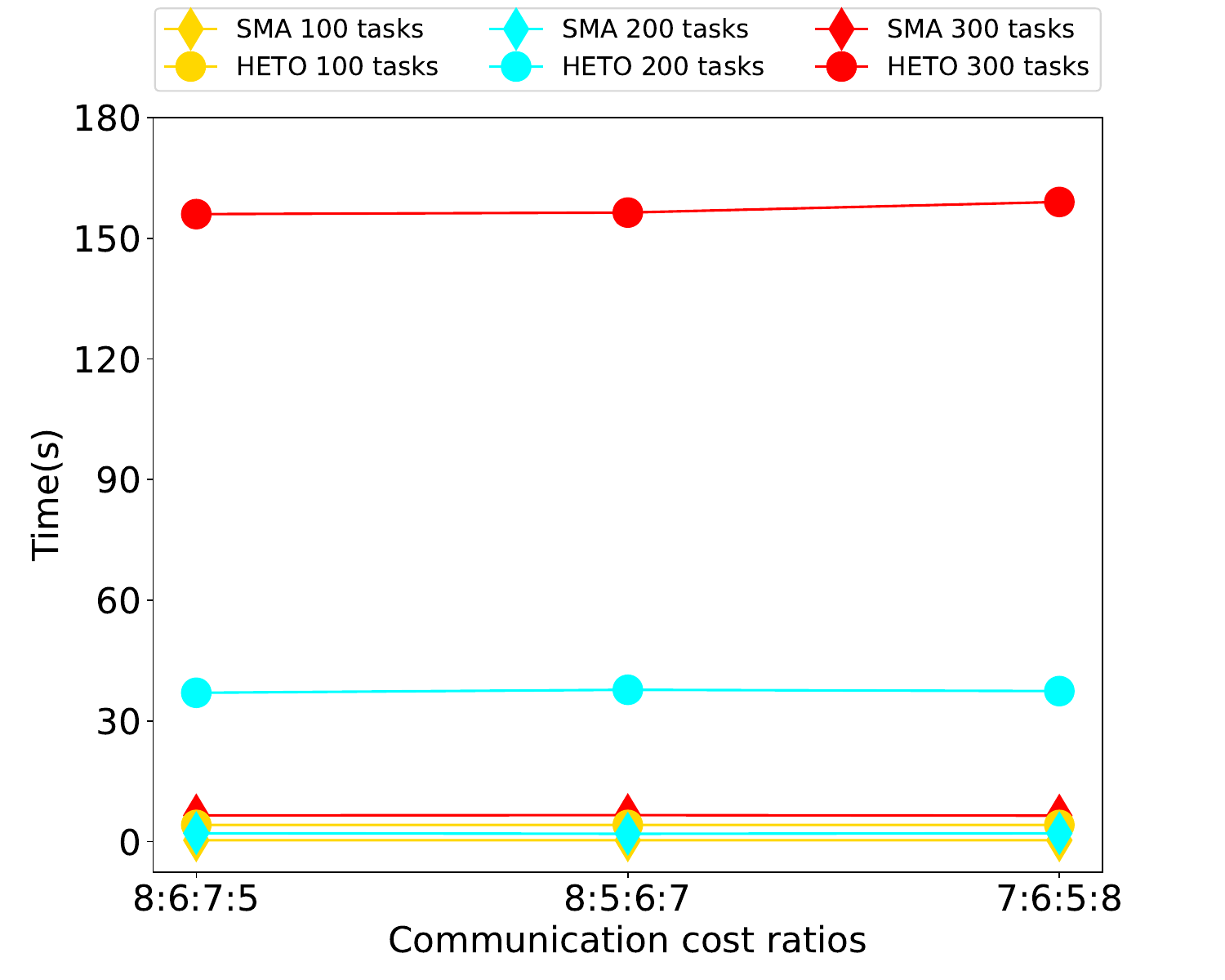}
     \label{fig:snap_ratio_time_new}}
    \caption{Comparison of offloading results without communication assumption.}
    \label{fig:snap_ratio}
\end{figure*}
\begin{figure*}[!h]
    \centering
    \subfloat[Adhering to the communication assumption]{\includegraphics[width=0.7\textwidth]{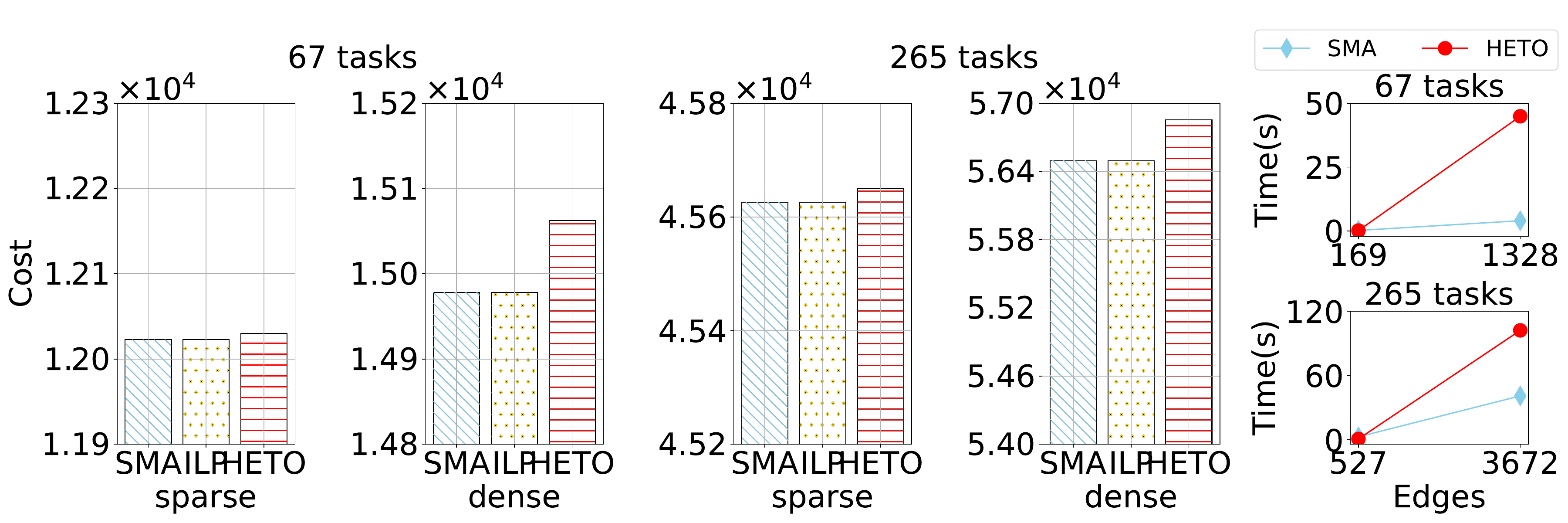}
    \label{figh::syn_density}} 
    \\
    \subfloat[Violating the communication assumption]{\includegraphics[width=0.7\textwidth]{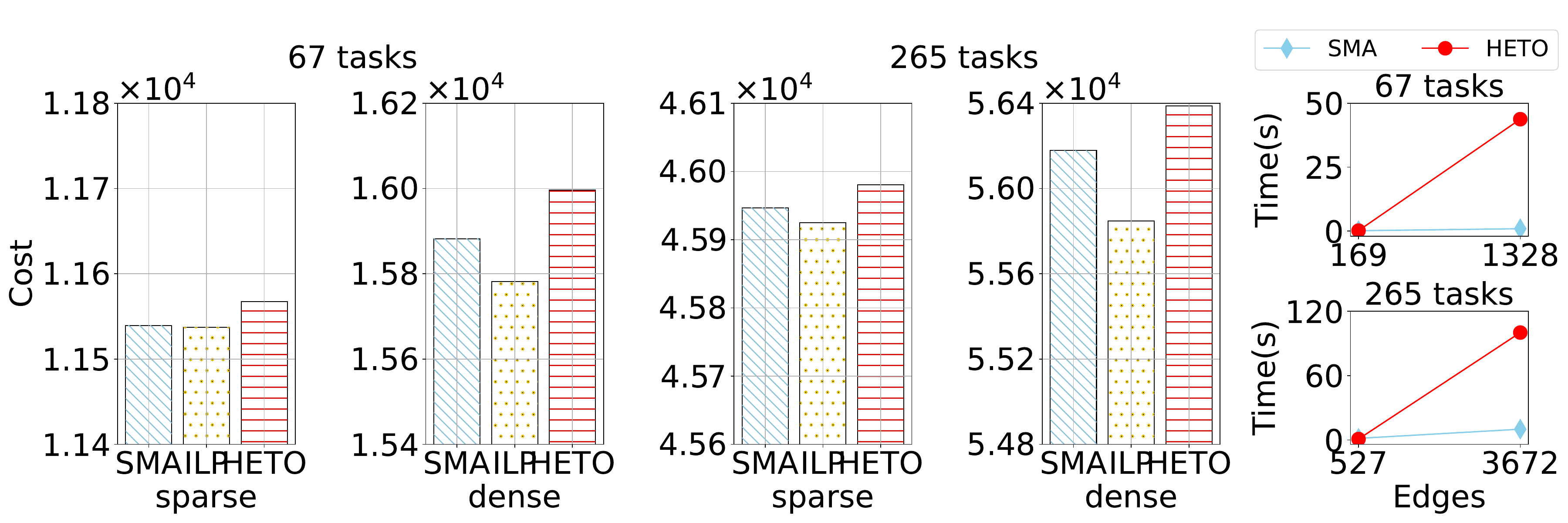}
    \label{figh::syn_density_violate}}
    \caption{Offloading results comparison over the graph of different densities.}
    \label{fig:syn_cost}
\end{figure*}
}

\subsection{Synthesized Datasets}
In this part, we evaluate our algorithm across graphs of varying densities, a vital aspect considering the pivotal role of node-to-node communication in computational offloading as emphasized in \cite{du2020algorithmics}. To facilitate this analysis, we devised a synthetic dataset designed to mirror variations in inter-node communication by modulating the number of edges, which in turn, alters the graph's density. We employed a uniform distribution approach for assigning weights to the nodes and edges to maintain consistency in our evaluation.

Our analysis focuses on two distinct graph pairs to demonstrate the algorithm's performance under different density scenarios. The initial graph pair consists of 67 nodes, where the sparse version contains 169 edges, escalating to 1,328 edges in the dense graph.  The subsequent pair features a graph with 265 nodes, starting with 527 edges in its sparse form and expanding to 3,672 edges in the dense version. This allows us to analyze how our algorithm adapts and performs as the graph density varies.

\vspace{\topsep}
\descr{Adhering to the Communication Assumption}

Fig. \ref{figh::syn_density}. illustrates the outcomes of experiments under the communication assumptions. It demonstrated that SMA outperformed HETO in offloading cost and running time for both sparse and dense graphs. Since the time complexity of the HETO is $O(E^2)$, its running time is affected by the number of edges, so SMA is more suitable for dense graph scenarios.

\vspace{\topsep}
\descr{Violating the Communication Assumption}

In  Fig. \ref{figh::syn_density_violate}, we depict the experimental results where the communication assumption is violated. We find that the performance of HETO falls at around $60\%$, whereas  SMA and ILP both attain optimum. 
Hence, it is evident that \textbf{SMA's offloading approach continues to surpass HETO in terms of performance, regardless of whether the communication assumption is compromised, applicable to sparse and dense graphs.}

As shown in the above experiments, our new model exhibits significant performance gain compared to the current state-of-the-art algorithm, demonstrating its effectiveness in practical applications.

\section{Conclusion \label{sec:Conclusion}}
In this paper, we propose a communication assumption for the offloading problem in the edge-cloud environment, tailored to real-world applications,  with the goal of minimizing the total service cost, which consists of both computation and communication costs. We first show that solving the offloading problem becomes NP-hard—even with symmetric edge costs—when the communication assumption does not hold, through a reduction from MAX-CUT. As our main result, we show that with communication assumption in place, the offloading problem can be transformed to submodular minimization, and is thus polynomially solvable. Notably, this polynomial solvability persists even under additional real-time constraints. By combining these findings regarding polynomial solvability and NP-hardness, we revealed that the communication assumption is the crucial factor that determines the difficulty of the offloading problem. Lastly, we evaluated the practical performance of our algorithm through extensive experiments, demonstrating that it significantly outperforms the state of the art in practice.

\section*{Acknowledgment}

This work is supported by the National Natural Science Foundation of China (Nos. 12271098 and 61772005).


%

\bibliographystyle{IEEEtran}
\bibliography{offloadingref}

\begin{thebibliography}{10}
\providecommand{\url}[1]{#1}
\csname url@samestyle\endcsname
\providecommand{\newblock}{\relax}
\providecommand{\bibinfo}[2]{#2}
\providecommand{\BIBentrySTDinterwordspacing}{\spaceskip=0pt\relax}
\providecommand{\BIBentryALTinterwordstretchfactor}{4}
\providecommand{\BIBentryALTinterwordspacing}{\spaceskip=\fontdimen2\font plus
\BIBentryALTinterwordstretchfactor\fontdimen3\font minus
  \fontdimen4\font\relax}
\providecommand{\BIBforeignlanguage}[2]{{%
\expandafter\ifx\csname l@#1\endcsname\relax
\typeout{** WARNING: IEEEtran.bst: No hyphenation pattern has been}%
\typeout{** loaded for the language `#1'. Using the pattern for}%
\typeout{** the default language instead.}%
\else
\language=\csname l@#1\endcsname
\fi
#2}}
\providecommand{\BIBdecl}{\relax}
\BIBdecl

\bibitem{shi2016edge}
W.~Shi, J.~Cao, Q.~Zhang, Y.~Li, and L.~Xu, ``Edge computing: Vision and
  challenges,'' \emph{IEEE internet of things journal}, vol.~3, no.~5, pp.
  637--646, 2016.

\bibitem{xiao2019edge}
Y.~Xiao, Y.~Jia, C.~Liu, X.~Cheng, J.~Yu, and W.~Lv, ``Edge computing security:
  State of the art and challenges,'' \emph{Proceedings of the IEEE}, vol. 107,
  no.~8, pp. 1608--1631, 2019.

\bibitem{liu2019edge}
S.~Liu, L.~Liu, J.~Tang, B.~Yu, Y.~Wang, and W.~Shi, ``Edge computing for
  autonomous driving: Opportunities and challenges,'' \emph{Proceedings of the
  IEEE}, vol. 107, no.~8, pp. 1697--1716, 2019.

\bibitem{DingPotential2022}
Y.~Ding, K.~Li, C.~Liu, and K.~Li, ``A potential game theoretic approach to
  computation offloading strategy optimization in end-edge-cloud computing,''
  \emph{IEEE Transactions on Parallel and Distributed Systems}, vol.~33, no.~6,
  pp. 1503--1519, 2022.

\bibitem{YueTODG2022}
S.~Yue, J.~Ren, N.~Qiao, Y.~Zhang, H.~Jiang, Y.~Zhang, and Y.~Yang, ``Todg:
  Distributed task offloading with delay guarantees for edge computing,''
  \emph{IEEE Transactions on Parallel and Distributed Systems}, vol.~33, no.~7,
  pp. 1650--1665, 2022.

\bibitem{wang2020survey}
B.~Wang, C.~Wang, W.~Huang, Y.~Song, and X.~Qin, ``A survey and taxonomy on
  task offloading for edge-cloud computing,'' \emph{IEEE Access}, vol.~8, pp.
  186\,080--186\,101, 2020.

\bibitem{MahmoodiCloudOffloading2019}
S.~E. Mahmoodi, R.~N. Uma, and K.~P. Subbalakshmi, ``Optimal joint scheduling
  and cloud offloading for mobile applications,'' \emph{IEEE Transactions on
  Cloud Computing}, vol.~7, no.~2, pp. 301--313, 2019.

\bibitem{HuamingCloudOffloading2020}
H.~Wu, Y.~Sun, and K.~Wolter, ``Energy-efficient decision making for mobile
  cloud offloading,'' \emph{IEEE Transactions on Cloud Computing}, vol.~8,
  no.~2, pp. 570--584, 2020.

\bibitem{kumar2010cloud}
K.~Kumar and Y.-H. Lu, ``Cloud computing for mobile users: Can offloading
  computation save energy?'' \emph{Computer}, vol.~43, no.~4, pp. 51--56, 2010.

\bibitem{kumar2013survey}
K.~Kumar, J.~Liu, Y.-H. Lu, and B.~Bhargava, ``A survey of computation
  offloading for mobile systems,'' \emph{Mobile networks and Applications},
  vol.~18, no.~1, pp. 129--140, 2013.

\bibitem{wu2016optimal}
H.~Wu, W.~Knottenbelt, K.~Wolter, and Y.~Sun, ``An optimal offloading
  partitioning algorithm in mobile cloud computing,'' in \emph{International
  Conference on Quantitative Evaluation of Systems}.\hskip 1em plus 0.5em minus
  0.4em\relax Springer, 2016, pp. 311--328.

\bibitem{wang2017computational}
W.~Wang and W.~Zhou, ``Computational offloading with delay and capacity
  constraints in mobile edge,'' in \emph{2017 IEEE International Conference on
  Communications (ICC)}.\hskip 1em plus 0.5em minus 0.4em\relax IEEE, 2017, pp.
  1--6.

\bibitem{wang2018cost}
Y.~Wang, S.~He, X.~Fan, C.~Xu, and X.-H. Sun, ``On cost-driven collaborative
  data caching: A new model approach,'' \emph{IEEE Transactions on Parallel and
  Distributed Systems}, vol.~30, no.~3, pp. 662--676, 2018.

\bibitem{dong2018computation}
L.~Dong, F.~Wang, and J.~Shan, ``Computation offloading for mobile-edge
  computing with maximum flow minimum cut,'' in \emph{Proceedings of the 2nd
  International Conference on Computer Science and Application Engineering},
  2018, pp. 1--5.

\bibitem{du2020algorithmics}
M.~Du, Y.~Wang, K.~Ye, and C.~Xu, ``Algorithmics of cost-driven computation
  offloading in the edge-cloud environment,'' \emph{IEEE Transactions on
  Computers}, vol.~69, no.~10, pp. 1519--1532, 2020.

\bibitem{han2020approximation}
X.~Han, G.~Gao, L.~Ning, Y.~Wang, and Y.~Zhang, ``Approximation algorithm for
  the offloading problem in edge computing,'' in \emph{International Conference
  on Wireless Algorithms, Systems, and Applications}.\hskip 1em plus 0.5em
  minus 0.4em\relax Springer, 2020, pp. 134--144.

\bibitem{hao2018energy}
Y.~Hao, M.~Chen, L.~Hu, M.~S. Hossain, and A.~Ghoneim, ``Energy efficient task
  caching and offloading for mobile edge computing,'' \emph{IEEE Access},
  vol.~6, pp. 11\,365--11\,373, 2018.

\bibitem{chen2018dynamic}
Y.~Chen, N.~Zhang, Y.~Zhang, and X.~Chen, ``Dynamic computation offloading in
  edge computing for internet of things,'' \emph{IEEE Internet of Things
  Journal}, vol.~6, no.~3, pp. 4242--4251, 2018.

\bibitem{lim2020load}
J.~Lim and D.~Lee, ``A load balancing algorithm for mobile devices in edge
  cloud computing environments,'' \emph{Electronics}, vol.~9, no.~4, p. 686,
  2020.

\bibitem{fang2019job}
X.~Fang, Z.~Cai, W.~Tang, G.~Luo, L.~Junzhou, R.~Bi, and H.~Gao, ``Job
  scheduling to minimize total completion time on multiple edge servers,''
  \emph{IEEE Transactions on Network Science and Engineering}, 2019.

\bibitem{tan2017online}
H.~Tan, Z.~Han, X.-Y. Li, and F.~C. Lau, ``Online job dispatching and
  scheduling in edge-clouds,'' in \emph{IEEE INFOCOM 2017-IEEE Conference on
  Computer Communications}.\hskip 1em plus 0.5em minus 0.4em\relax IEEE, 2017,
  pp. 1--9.

\bibitem{Luo2024computation}
C.~Luo, J.~Zhang, X.~Cheng, Y.~Hong, Z.~Chen, and X.~Xing, ``Computation
  off-loading in resource-constrained edge computing systems based on deep
  reinforcement learning,'' \emph{IEEE Transactions on Computers}, vol.~73,
  no.~1, pp. 109--122, 2024.

\bibitem{meng2019dedas}
J.~Meng, H.~Tan, C.~Xu, W.~Cao, L.~Liu, and B.~Li, ``Dedas: Online task
  dispatching and scheduling with bandwidth constraint in edge computing,'' in
  \emph{IEEE INFOCOM 2019-IEEE Conference on Computer Communications}.\hskip
  1em plus 0.5em minus 0.4em\relax IEEE, 2019, pp. 2287--2295.

\bibitem{ji2023towards}
T.~Ji, X.~Wan, X.~Guan, A.~Zhu, and F.~Ye, ``Towards optimal application
  offloading in heterogeneous edge-cloud computing,'' \emph{IEEE Transactions
  on Computers}, vol.~72, no.~11, pp. 3259--3272, 2023.

\bibitem{Wang2022Dependent}
J.~Wang, J.~Hu, G.~Min, W.~Zhan, A.~Y. Zomaya, and N.~Georgalas, ``Dependent
  task offloading for edge computing based on deep reinforcement learning,''
  \emph{IEEE Transactions on Computers}, vol.~71, no.~10, pp. 2449--2461, 2022.

\bibitem{Kai2021collaborative}
C.~Kai, H.~Zhou, Y.~Yi, and W.~Huang, ``Collaborative cloud-edge-end task
  offloading in mobile-edge computing networks with limited communication
  capability,'' \emph{IEEE Transactions on Cognitive Communications and
  Networking}, vol.~7, no.~2, pp. 624--634, 2021.

\bibitem{pasteris2019service}
S.~Pasteris, S.~Wang, M.~Herbster, and T.~He, ``Service placement with provable
  guarantees in heterogeneous edge computing systems,'' in \emph{IEEE INFOCOM
  2019-IEEE Conference on Computer Communications}.\hskip 1em plus 0.5em minus
  0.4em\relax IEEE, 2019, pp. 514--522.

\bibitem{gao2019winning}
B.~Gao, Z.~Zhou, F.~Liu, and F.~Xu, ``Winning at the starting line: Joint
  network selection and service placement for mobile edge computing,'' in
  \emph{IEEE INFOCOM 2019-IEEE Conference on Computer Communications}.\hskip
  1em plus 0.5em minus 0.4em\relax IEEE, 2019, pp. 1459--1467.

\bibitem{farhadi2019service}
V.~Farhadi, F.~Mehmeti, T.~He, T.~La~Porta, H.~Khamfroush, S.~Wang, and K.~S.
  Chan, ``Service placement and request scheduling for data-intensive
  applications in edge clouds,'' in \emph{IEEE INFOCOM 2019-IEEE Conference on
  Computer Communications}.\hskip 1em plus 0.5em minus 0.4em\relax IEEE, 2019,
  pp. 1279--1287.

\bibitem{ouyang2019adaptive}
T.~Ouyang, R.~Li, X.~Chen, Z.~Zhou, and X.~Tang, ``Adaptive user-managed
  service placement for mobile edge computing: An online learning approach,''
  in \emph{IEEE INFOCOM 2019-IEEE Conference on Computer Communications}.\hskip
  1em plus 0.5em minus 0.4em\relax IEEE, 2019, pp. 1468--1476.

\bibitem{schrijver2003combinatorial}
A.~Schrijver, \emph{Combinatorial optimization: polyhedra and
  efficiency}.\hskip 1em plus 0.5em minus 0.4em\relax Springer Science \&
  Business Media, 2003, vol.~24.

\bibitem{grotschel1981ellipsoid}
M.~Gr{\"o}tschel, L.~Lov{\'a}sz, and A.~Schrijver, ``The ellipsoid method and
  its consequences in combinatorial optimization,'' \emph{Combinatorica},
  vol.~1, no.~2, pp. 169--197, 1981.

\bibitem{cunningham1985submodular}
W.~H. Cunningham, ``On submodular function minimization,''
  \emph{Combinatorica}, vol.~5, no.~3, pp. 185--192, 1985.

\bibitem{iwata2001combinatorial}
S.~Iwata, L.~Fleischer, and S.~Fujishige, ``A combinatorial strongly polynomial
  algorithm for minimizing submodular functions,'' \emph{Journal of the ACM
  (JACM)}, vol.~48, no.~4, pp. 761--777, 2001.

\bibitem{lewis1983computers}
H.~R. Lewis, ``Computers and intractability. a guide to the theory of
  np-completeness,'' 1983.

\bibitem{buchbinder2015tight}
N.~Buchbinder, M.~Feldman, J.~Seffi, and R.~Schwartz, ``A tight linear time
  (1/2)-approximation for unconstrained submodular maximization,'' \emph{SIAM
  Journal on Computing}, vol.~44, no.~5, pp. 1384--1402, 2015.

\bibitem{svitkina2011submodular}
Z.~Svitkina and L.~Fleischer, ``Submodular approximation: Sampling-based
  algorithms and lower bounds,'' \emph{SIAM Journal on Computing}, vol.~40,
  no.~6, pp. 1715--1737, 2011.

\bibitem{sviridenko2004note}
M.~Sviridenko, ``A note on maximizing a submodular set function subject to a
  knapsack constraint,'' \emph{Operations Research Letters}, vol.~32, no.~1,
  pp. 41--43, 2004.

\bibitem{garey1974some}
M.~R. Garey, D.~S. Johnson, and L.~Stockmeyer, ``Some simplified np-complete
  problems,'' in \emph{Proceedings of the sixth annual ACM symposium on Theory
  of computing}, 1974, pp. 47--63.

\bibitem{prakash2018storage}
N.~Prakash, V.~Abdrashitov, and M.~M{\'e}dard, ``The storage versus
  repair-bandwidth trade-off for clustered storage systems,'' \emph{IEEE
  Transactions on Information Theory}, vol.~64, no.~8, pp. 5783--5805, 2018.

\bibitem{leskovec2012learning}
J.~Leskovec and J.~Mcauley, ``Learning to discover social circles in ego
  networks,'' \emph{Advances in neural information processing systems},
  vol.~25, 2012.

\bibitem{takac2012data}
L.~Takac and M.~Zabovsky, ``Data analysis in public social networks,'' in
  \emph{International scientific conference and international workshop present
  day trends of innovations}, vol.~1, no.~6, 2012.

\bibitem{leskovec2009community}
J.~Leskovec, K.~J. Lang, A.~Dasgupta, and M.~W. Mahoney, ``Community structure
  in large networks: Natural cluster sizes and the absence of large
  well-defined clusters,'' \emph{Internet Mathematics}, vol.~6, no.~1, pp.
  29--123, 2009.

\bibitem{krause2010sfo}
A.~Krause, ``Sfo: A toolbox for submodular function optimization,'' \emph{The
  Journal of Machine Learning Research}, vol.~11, pp. 1141--1144, 2010.

\end{thebibliography}

\end{document}